\theoremstyle{thmstyleone}%
\newtheorem{theorem}{Theorem}%
\theoremstyle{thmstylethree}%
\newtheorem{definition}{Definition}
\newtheorem{lemma}{Lemma}
\begin{document}

\title[Article Title]{Optimistic $\epsilon$-Greedy Exploration for Cooperative Multi-Agent Reinforcement Learning}

\author[1]{\fnm{Ruoning} \sur{Zhang}}\email{zhangruoning@std.uestc.edu.cn}

\author*[2]{\fnm{Siying} \sur{Wang}}\email{siyingwang@swun.edu.cn}

\author[1]{\fnm{Wenyu} \sur{Chen}}\email{cwy@uestc.edu.cn}

\author[1]{\fnm{Yang} \sur{Zhou}}\email{zhouy@std.uestc.edu.cn}

\author[3]{\fnm{Zhitong} \sur{Zhao}}\email{zzt@cdut.edu.cn}

\author[1]{\fnm{Zixuan} \sur{Zhang}}\email{zhangzixuan@std.uestc.edu.cn}

\author[1]{\fnm{Ruijie} \sur{Zhang}}\email{ruijie\_zhang@std.uestc.edu.cn}

\author[4]{\fnm{Stefano V.} \sur{Albrecht}}\email{stefano.albrecht@ntu.edu.sg}

\affil*[1]{\orgdiv{School of Computer Science and Engineering}, \orgname{University of Electronic Science and Technology of China}, \orgaddress{\city{Chengdu}, \country{China}}}

\affil[2]{\orgdiv{College of Computer Science and Artificial Intelligence}, \orgname{Southwest Minzu University}, \orgaddress{\city{Chengdu},  \country{China}}}

\affil[3]{\orgdiv{College of Nuclear Technology and Automation Engineering}, \orgname{Chengdu University of Technology}, \orgaddress{\city{Chengdu}, \country{China}}}

\affil[4]{\orgdiv{College of Computing and Data Science}, \orgname{Nanyang Technological University}, \orgaddress{\country{Singapore}}}

\abstract{The Centralized Training with Decentralized Execution (CTDE) paradigm is widely used in cooperative multi-agent reinforcement learning. However, conventional methods based on CTDE can suffer from value underestimation and converge to suboptimal solutions. While such underestimation is typically attributed to the representational limitations of monotonic structures, we provide a novel perspective by demonstrating that the insufficient sampling of optimal joint actions during exploration is also a critical factor. To address this problem, we propose Optimistic $\epsilon$-Greedy Exploration. Our method introduces optimistic action-value networks that serve as decoupled exploration indicators, which we theoretically prove to converge in probability to the maximum achievable returns. By sampling actions from these distributions with a probability of $\epsilon$, we effectively increase the selection frequency of high-return joint actions. Experimental results in various environments reveal that our strategy effectively prevents the algorithm from falling into suboptimal solutions and significantly improves final returns, win rates, and convergence speeds compared to other enhanced algorithms. Our code has been open-sourced at \href{https://github.com/qxqxtxdy/OptimisticExploration}{OptimisticExploration}.}

\keywords{Multi-agent reinforcement learning, Centralized training with decentralized execution, Value decomposition underestimation, Optimistic exploration}

\maketitle

\section{Introduction}\label{sec1}

Reinforcement learning (RL) is a paradigm for modeling the interaction between agent and environment, attracting extensive attention in recent years. Under this paradigm, an agent interacts with the environment through trial and error, optimizing its policy to maximize cumulative rewards. Powered by the development of deep neural networks, RL has demonstrated remarkable potential in complex tasks such as autonomous driving \cite{wang2021decision}, computer games \cite{liu2025aligngap}, and recommendation systems \cite{gao2023cirs}.

However, conventional single-agent RL algorithms encounter significant limitations in real-world scenarios such as multi-robot control \cite{lowe2017multi} and multi-player computer games \cite{shmakov2019colosseumrl}, as these situations require algorithms to consider the interaction between multiple agents. This critical gap has driven the rapid development of Multi-Agent Reinforcement Learning (MARL) \cite{albrecht2024multi}, which extends RL algorithms to multi-agent environments. In fully cooperative MARL, the objective is to maximize a shared cumulative reward through teamwork, requiring agents to strategically coordinate their actions. To achieve this, Centralized Training with Decentralized Execution (CTDE) has emerged as a widely adopted paradigm \cite{amato2024introduction}. CTDE enables agents to learn cooperative policies by leveraging global information during centralized training, while enabling decentralized execution after training by conditioning the policies only on the agents' local observations.

Among the various approaches in CTDE, Value Decomposition (VD) simplifies complex multi-agent coordination by factorizing a global action-value function into individual local components \cite{sunehag2017value}. To ensure consistency between global function and individual components, the joint action maximizing the global estimation is required to match the combination of each agent's locally optimal actions \cite{son2019qtran}, known as the Individual Global Max (IGM) condition. Conventional VD algorithms typically satisfy this constraint by aligning local and global estimations via monotonic aggregation functions. For instance, VDN \cite{sunehag2017value} uses the sum of local estimations as the global estimation, while QMIX \cite{rashid2020monotonic} employs a monotonic hypernetwork with non-negative parameters to meet this requirement. 

Although these algorithms have shown promising performance in various tasks \cite{papoudakis2021benchmarking,hu2021rethinking}, a potential drawback lies in the limited accuracy of their value estimations. A typical example involves scenarios where the optimal joint actions are surrounded by detrimental ones, such that any agent deviating from the optimal joint action can cause a steep decline in the team reward. Table \ref{addtable:1}(a) exemplifies this problem within a matrix game setting. While the maximum reward of 8 is attained only when all agents jointly execute action $a_1$, a unilateral deviation by any single agent severely penalizes the team, dropping the reward to -12. Table \ref{addtable:1}(b) details the estimation results produced by the conventional monotonic value decomposition algorithm QMIX. Because this estimation process incorporates the lower rewards yielded when other agents deviate from $a_1$, the value of the true optimal joint action is systematically underestimated, which ultimately traps the policy in suboptimal solution. Several studies have attempted to tackle this problem by introducing more complex network architectures involving credit assignment \cite{zhao2024qdap}, mutual information \cite{iqbal2019actor}, or consistency protocols \cite{zhang2018fully}. Nevertheless, the reliance on complex neural networks can introduce new challenges, such as training instability and slower convergence rates \cite{lyu2019gradient}. In highly intricate multi-agent environments, excessively complex architectures can significantly increase the probability of training failure \cite{gronauer2022multi}.

\begin{table*}[t]
    \centering
    \setlength{\tabcolsep}{5.5pt}
    \begin{subtable}[c]{0.4\textwidth}
        \centering
            \begin{tabular}{|c|c|c|c|}
				\hline
				    & $a_1$ & $a_2$ & $a_3$ \\
				\hline
				$a_1$ & \textcolor{blue}{8} & -12 & -12 \\
				\hline
				$a_2$ & -12 & 0 & 0 \\
				\hline
				$a_3$ & -12 & 0 & 0 \\
				\hline
            \end{tabular}
            \caption{Payoff Matrix}
    \end{subtable}
    \begin{subtable}[c]{0.4\linewidth}
        \centering
            \begin{tabular}{|c|c|c|c|}
				\hline
				  & $a_1$ & $a_2$ & $a_3$ \\
				\hline
				$a_1$ & -7.76 & -7.76 & -7.76 \\
				\hline
				$a_2$ & -7.76 & \textcolor{red}{0.04} & 0.02 \\
				\hline
				$a_3$ & -7.76 & 0.02 & 0.01 \\
				\hline
            \end{tabular}
        \caption{Estimation Results of QMIX}
    \end{subtable}
\caption{Instance of Systematic Value Underestimation within Matrix Game\label{addtable:1}}
\end{table*}

In this paper, we conduct an in-depth investigation into the mathematical principles underlying value underestimation to develop a more effective solution. As pointed out by ParetoAC \cite{christianos2023pareto}, value underestimation fundamentally stems from the agents' failure to robustly identify Pareto-optimal joint actions. Thus, our analysis addresses the critical question: what causes VD-based methods to deviate from optimal joint actions? Through the derivation of the neural networks' parameter update processes in VD-based methods, we uncover that actions are implicitly weighted according to their sampling frequency during the computation of the temporal-difference (TD) loss. Such weighted updates induce systematic shifts in value estimations towards actions with higher weights. If the actions receiving higher weights are not the actual optimal choices, the values of optimal joint actions may be underestimated. Based on our analysis, we propose a novel exploration strategy: \textbf{Optimistic $\epsilon$-Greedy Exploration}. This strategy maintains monotonically non-decreasing optimistic action-value networks that are trained to estimate the maximum expected return of each agent, converging in probability to identify optimal joint actions. During training, with a probability of $\epsilon$, the agents sample actions from these optimistic action-value networks to encourage more frequent selection of high-return actions. This strategy provides a theoretically-founded solution to the implicit suboptimal weighting problem that affects other exploration strategies, enabling agents to learn optimal joint actions. Experimental results across Matrix Games, Predator and Prey, and the StarCraft Multi-Agent Challenge (SMAC) environments demonstrate that our method effectively prevents agents from falling into suboptimal action selection, achieving higher final returns, improved win rates, and faster convergence rates. 

In summary, our main contributions are:

\begin{itemize}
\item We theoretically analyze the systematic underestimation of optimal joint actions in cooperative MARL, showing that the insufficient sampling of optimal joint actions significantly contributes to this underestimation.
\item Based on our analysis, we propose Optimistic $\epsilon$-Greedy Exploration, which utilizes monotonically non-decreasing optimistic action-value networks to estimate maximum expected returns. This strategy robustly guides policies toward optimal joint actions by increasing the sampling frequency of high-return joint actions.
\item We integrate our proposed exploration strategy within the VDN \cite{sunehag2017value} and QMIX \cite{rashid2020monotonic} frameworks and evaluate it across several cooperative multi-agent environments of varying complexity. The results demonstrate that our variants effectively resolves underestimation, achieving higher final returns, superior win rates, and faster convergence compared to the conventional VDN and QMIX algorithms as well as other enhanced value decomposition algorithms relying on manual weighting or complex architectures\cite{rashid2020weighted,son2019qtran}.
\end{itemize}

The remainder of this paper is organized as follows: In Section \ref{sec:2}, we review the related work on MARL. In Section \ref{sec:3}, we introduce the technical preliminaries of our approach. In Section \ref{sec:4}, we investigate the mathematical foundations behind the value underestimation problem, analyze the convergence properties of the optimistic action-value networks, and present the Optimistic $\epsilon$-Greedy Exploration method proposed based on these principles. In Section \ref{sec:5}, we demonstrate the comparative results across various cooperative multi-agent experimental environments and parameter settings. In Section \ref{sec:6}, we conclude our work and discuss potential future research directions.

\section{Related Work}\label{sec:2}

\textbf{Value Decomposition and Architectural Enhancements:} Value decomposition serves as a specific instantiation of the Centralized Training with Decentralized Execution (CTDE) paradigm within value-based multi-agent reinforcement learning. It addresses the trade-off between the inability of independent Q-learning \cite{littman1994markov} to learn cooperative policies and the inability of centralized Q-learning \cite{foerster2017stabilising,foerster2018counterfactual} to learn policies that can run independently for each agent under partial observability \cite{albrecht2024multi}. Representative value decomposition algorithms, such as VDN \cite{sunehag2017value} and QMIX \cite{rashid2020monotonic}, allow each agent to decide its action independently during execution, while developing cooperative policies during the training phase via a Mixer utilizing either summation or a hypernetwork architecture. Although such methods have demonstrated promising performance in tasks such as StarCraft II \cite{samvelyan2019starcraft}, the traditional $\epsilon$-greedy exploration strategy adopted by these methods introduces a potential risk of value underestimation. Some methods attempt to mitigate the shortcomings of the exploration strategy by introducing complex network architectures. QPLEX \cite{wang2020qplex} decouples local estimation into state value and action advantage to jointly estimate global value; QTRAN \cite{son2019qtran} introduces additional estimation terms to correct biases in value representation. ResQ \cite{shen2022resq} and RQN \cite{pina2022residual} employ residual-like networks to provide correction terms for each agent, mitigating the impacts of irregular reward distributions. Unfortunately, increasing the complexity of simple network introduces challenges to the training process, leading to slower convergence for agents and reduced training stability \cite{papoudakis2021benchmarking}. Therefore, a more attractive solution lies in optimizing the exploration strategy to better guide agents toward optimal joint action, without sacrificing network simplicity.

\textbf{Advanced Exploration Strategies in MARL:} Several approaches have focused on enhancing exploration strategies to help agents gain a more comprehensive understanding of the environment, thereby enabling more accurate value estimation. Curiosity-based approaches \cite{bohmer2019exploration,liu2021cooperative} encourage exploration through intrinsic motivation methods; EITI \cite{wang2019influence} and VM3-AC \cite{kim2020maximum} encourage exploration by maximizing the information obtained by agents. Although these approaches can effectively explore state or action patterns within the environment, they neglect the guiding role of exploration on the policy. Consequently, when agents get trapped in suboptimal solutions, these exploration strategies lack the capability to correct the training direction. LH-IRQN \cite{lyu2018likelihood} and DFAC \cite{sun2021dfac} attempt to estimate uncertainty in multi-agent environments using classifications or quantile distributions. However, computing the distribution is challenging due to the mutual influence among agents. EMAX \cite{schaefer2023ensemble} equips each agent with multiple sub-networks to simplify mean and variance estimation, thereby facilitating the exploration of maximum potential rewards. However, this architectural choice inevitably increases the computational overhead of the algorithm. Therefore, there is a pressing need in multi-agent reinforcement learning for a lightweight exploration strategy equipped with global optimal guidance to boost their performance.

\textbf{Guidance-driven Policy Learning:} With the aim of overcoming suboptimal solutions, some prior proposed methods leverage optimism to guide the direction of policy training. ParetoAC \cite{christianos2023pareto} optimistically utilizes the highest expected return among all possible teammate action combinations as the update target. However, evaluating these combinations is computationally intractable in environments with many agents, limiting its scalability. WQMIX \cite{rashid2020weighted} integrates optimism into value decomposition by assigning lower weights to suboptimal joint actions in the loss function. While this helps identify optimal joint actions, our experiments demonstrate that such direct constraints on value updates restrict the policy during early training, degrading convergence speed. OVI-QMIX \cite{li2024optimistic} avoids directly constraining the policy by introducing optimistic instructors, which guide updates by minimizing the KL divergence between the current and optimal policies. Several single-agent studies have focused on optimizing exploration strategies. Decoupled RL \cite{schafer2022decoupled} leverages intrinsic rewards while ensuring a stable training process by decoupling the exploration and execution policies. Preference-Guided Stochastic Exploration \cite{huang2023sampling} introduces an additional exploration strategy based on the specific advantage of each action. Distributional exploration \cite{mavrin2019distributional} estimates the upper quantile of the expected return to design exploration rewards, encouraging agents to explore actions likely to yield high returns. Although effective in single-agent environments, these methods struggle to extend to multi-agent settings, as the mutual influence among agents vastly complicates the estimation of policy and return distributions. Consequently, designing a CTDE-compatible exploration strategy capable of effectively guiding the training process toward optimal joint actions remains an open challenge.

\section{Preliminaries}\label{sec:3}
\subsection{Dec-POMDP}
We consider a fully cooperative multi-agent learning problem modeled as a Decentralized Partially Observable Markov Decision Processes (Dec-POMDPs) \cite{albrecht2024multi}. A Dec-POMDP is defined by the tuple $\langle I,S,A,P,R,Z,O,\gamma \rangle$, where $I$ represents the set of agents with cardinality $|I|=n$, and $S$ denotes the set of global states. $A$ represents the set of all possible actions for each agent. At each time step, each agent $i \in I$ chooses an action $a_i \in A$. The state transition function $P(s'|s,a_1,...,a_n):S \times A^n \times S \rightarrow [0,1]$ describes the transition probability distribution over global states $S$. The reward function $R(s,a_1,...,a_n):S\times A^n \rightarrow \mathbb{R}$ specifies the reward from the environment. $Z$ represents the set of possible local observations for the agents, where each individual observation $o_i \in Z$ is drawn according to the observation distribution function $O(s,i): S \times I \times Z \rightarrow [0,1]$. Each agent has an observation-action history $\tau_i \in (Z\times A)^*$. Agents aim to find a joint policy $\pi(a_1,a_2,...,a_n|\tau_1,\tau_2,...,\tau_n) = \{\pi_1(a_1|\tau_1),...,\pi_n(a_n|\tau_n)\}$ that maximizes the discounted cumulative return $R_{tot}=\sum_{t=0}^{\infty} \gamma^t R_t(s,a_1,...,a_n)$, where $\gamma \in [0,1]$ is the discount factor over time.

\subsection{Value Decomposition}

The value decomposition method implements the CTDE paradigm in value-based multi-agent reinforcement learning. This approach introduces
a global action-value function $Q_{tot}(\mathbf{\tau},\mathbf{a})$ for the entire team, along with local action-value functions $Q_{i}(\tau_i,a_i)$ for each agent, where $\tau$ and $\tau_i$ represent joint and individual action-observation histories respectively. VD methods assume that the global function can be factored into the local functions through an aggregation function, expressed as $Q_{tot}(\mathbf{\tau},\mathbf{a})=g(Q_{1}(\tau_1,a_1),...,Q_{n}(\tau_n,a_n))$. To ensure consistency between global and local estimations, the aggregation function  $g(\cdot)$ should satisfy the Individual-Global-Max (IGM) condition. The mathematical representation of the IGM condition is as follows:

\begin{equation} \label{eq:1}
    \mathop {arg\max} \limits_{a}Q_{tot}\left( {\tau },{a} \right) =\left( \begin{array}{c}
        \mathop {arg\max}_{a_1}Q_1\left( \tau _1,a_1 \right)\\
        \vdots\\
        \mathop {arg\max}_{a_n}Q_n\left( \tau _n,a_n \right)\\
    \end{array} \right) 
\end{equation}

An aggregation function that satisfies the IGM condition ensures that when the team's joint action maximizes the global action-value function, the local action-value functions of each agent are maximized as well. This allows each agent to achieve the joint optimal policy by greedily selecting the action that maximizes its local value, enabling decentralized execution.

\subsection{$\epsilon$-Greedy Exploration}

$\epsilon$-Greedy Exploration is a straightforward stochastic strategy widely used to balance the exploration-exploitation trade-off in reinforcement learning \cite{sutton1998reinforcement}. It introduces a parameter $\epsilon \in [0,1]$ to govern this balance: with probability $\epsilon$, an agent randomly selects an action from the available action space to encourage exploration; conversely, with probability $1-\epsilon$, the agent exploits its current knowledge by choosing the action that yields the highest expected return. It is worth noting that in practice, to ensure a deterministic greedy choice when multiple actions share the same maximum value, algorithms typically designate a unique optimal action by selecting the one with the smallest index. To align with this practical paradigm, the $\epsilon$-Greedy action selection strategy is mathematically formulated based on the existing value estimates as follows:

\begin{equation} \label{eq:3}
    \begin{aligned}
        \pi \left( a|s \right) =\begin{cases}
            1-\epsilon \,\,+\frac{\epsilon}{|a|}&		if\,\,a\,\,=\,\,arg\max_a Q\left( s,a \right)\\
            \frac{\epsilon}{|a|}&		if\,\,a\,\,\ne \,\,arg\max_a Q\left( s,a \right)\\
        \end{cases}
    \end{aligned}
\end{equation}

\section{Methodology}\label{sec:4}

In this section, we analyze parameter optimization in conventional monotonic value decomposition methods, demonstrating that the temporal-difference (TD) loss associated with each joint action is implicitly weighted by the joint action's sampling frequency during exploration; thus, insufficient sampling of optimal joint actions causes systematic value underestimation. To sample optimal joint actions more frequently during exploration, we introduce monotonically non-decreasing \textbf{optimistic action-value networks} trained to estimate the maximum expected returns. The update mechanism of these networks is termed ``optimistic,'' as they incorporate new rewards only when they exceed current estimates to safely track the upper bound of expected returns. We prove theoretically that these optimistic action-value networks converge in probability to the maximum expected returns for each agent's actions, identifying optimal joint actions through simple return comparisons. Based on this analysis, we propose the \textbf{Optimistic $\epsilon$-Greedy Exploration} strategy. By using the optimistic action-value networks' estimations to select actions during exploration, this strategy effectively guides the policy toward optimal joint actions.

\subsection{Analysis of Parameter Optimization}

In this subsection, we analyze the fundamental cause of the underestimation of optimal joint actions in conventional monotonic value decomposition methods to motivate our proposed approach. While this problem is typically viewed as an architectural bottleneck, we reveal that it is also heavily driven by suboptimal exploration outcomes. Through an analysis of the parameter optimization process for the joint action-value network, we demonstrate how exploration strategies implicitly weight the TD loss during training, leading to an under-emphasis on fitting the values of these optimal joint actions. Furthermore, we illustrate the impact of this under-emphasis on individual agents, showing how the exploratory behavior of teammates inadvertently forces a given agent to underestimate the value of its corresponding optimal action.

In practical value-based multi-agent reinforcement learning, neural networks optimize their parameters $\theta$ by sampling batches of trajectories from an experience replay buffer to minimize the TD loss. This parameter optimization process during each training step is formally expressed as:

\begin{equation} \label{eq:5}
\theta =\underset{\theta}{arg\min}\frac{1}{|B|}\cdot\sum_{\textcolor{black}{\tau \in B}}{[Q_{tot}(\mathbf{\tau},\mathbf{a};\theta) - y_{target}]^2} 
\end{equation}

Since the trajectories stored in the replay buffer are collected by the exploration strategy, and their quantity is relatively small compared to the total number of trajectories generated throughout the entire training process, we can approximate that the distribution of the sampled joint actions aligns with the current exploration strategy $\pi(\mathbf{a}|\mathbf{\tau})$. Consequently, the expected frequency of each joint action within a sampled batch is approximately $|B|\cdot\pi(\mathbf{a}|\mathbf{\tau})$. To demonstrate the relationship between the loss function and the joint actions, we group the loss terms by distinct joint actions rather than individual trajectories. This allows us to substitute the summation $\sum_{\tau \in B}$ in Eq. \eqref{eq:5} with $\sum_{\mathbf{a}\in A^n} |B|\pi(\mathbf{a}|\mathbf{\tau})$, reformulating the TD loss to reveal an implicit weighting mechanism as follows:

\begin{equation} \label{eq:6}
    \theta =\underset{\theta}{arg\min}\sum_{\mathbf{a}\in A^n}{\pi(\mathbf{a}|\mathbf{\tau})[Q_{tot}(\mathbf{\tau},\mathbf{a}) -  y_{target}]^2} 
\end{equation}

Eq. \eqref{eq:6} exposes a potential flaw in the parameter optimization process. It reveals that the loss function for any joint action is implicitly weighted by its probability of being sampled during exploration. Consequently, if the agents insufficiently sample optimal joint actions during exploration, the neural networks may fail to accurately fit their true values, shifting their focus instead toward optimizing the values of suboptimal joint actions. This flaw introduces a significant risk of value underestimation. To understand how this implicit weighting impacts individual agents within conventional value decomposition methods, we further decompose this joint probability. Since each agent selects its actions independently during exploration, the joint policy can be factorized as $\pi(\mathbf{a}|\mathbf{\tau}) = \prod_{i=1}^n{\pi(a_i|\tau_i)}$. For any given agent $i$, we can isolate its perspective by rewriting the optimization objective:

\begin{equation} \label{eq:7}
    \begin{aligned}
        \theta =&\underset{\theta}{arg\min}\sum_{\mathbf{a}\in A^n}{\pi (\mathbf{a}|\mathbf{\tau })L(\mathbf{\tau },a_i,a_{-i})}\\
         =&\underset{\theta}{arg\min}\sum_{\mathbf{a}\in A^n}{\pi (a_i|\tau _i)\pi (a_{-i}|\tau _{-i})}L(\mathbf{\tau },a_i,a_{-i})\\
         =&\underset{\theta}{arg\min}\sum_{a_i\in A}{\pi (a_i|\tau _i)\sum_{a_{-i}\in A^{n-1}}{\pi (a_{-i}|\tau _{-i})L(\mathbf{\tau },a_{i},a_{-i})}}\\
    \end{aligned}
\end{equation}

Here $L(\mathbf{\tau},a_{i},a_{-i}) = [Q_{tot}(\mathbf{\tau },a_{i},a_{-i})-y_{target}]^2$ for short. Eq. \eqref{eq:7} provides a critical insight into the parameter optimization process: for a given action $a_i$ of agent $i$, the optimization of its value estimation is heavily influenced by the exploration strategies of all other agents ($\pi (a_{-i}|\tau _{-i})$). When other agents frequently select suboptimal actions during exploration, the corresponding loss terms gain significantly larger implicit weights. Consequently, the joint action-value network is forced to pay more attention to minimizing the estimation error for these low-return scenarios. In conventional monotonic value decomposition methods, the given agent's value estimation for this action is forced to fit the returns of scenarios where other agents choose uncoordinated actions, leading to an underestimation of the action's true value.

To resolve this deficiency, the parameter optimization process must be realigned to focus on optimal joint actions. Based on the mechanics revealed in Eq. \eqref{eq:7}, a highly effective solution is to increase the probability of selecting optimal joint actions during the exploration phase as much as possible. By maximizing the sampling probabilities of these optimal joint actions, we correspondingly increase their implicit weights within the loss function, thereby forcing the parameter optimization process to pay significantly more attention to optimal joint actions when minimizing the TD loss.

\subsection{Optimistic Update}\label{sec:4.2}

In this subsection, we address the crucial step for sampling optimal joint actions more frequently during exploration: successfully identifying them. To achieve this, we introduce \textbf{optimistic action-value networks} that estimate the maximum expected returns and are updated in a monotonically non-decreasing manner. The nature of these networks is ``optimistic,'' as they consistently incorporate higher returns into their estimates while ignoring lower values. We prove theoretically that estimation functions representing these networks converge in probability to the maximum achievable returns under the Dec-POMDP setting, thereby serving as reliable indicators of optimal joint actions. By establishing this theoretical foundation, we demonstrate how agents can accurately pinpoint optimal joint actions despite the noisy and penalized returns caused by uncoordinated teammates.

To facilitate our subsequent theoretical analysis, we first formalize the core mathematical property that underpins the optimistic action-value networks. We define an estimation function's update sequence as being optimistic with respect to a target constant $c$ if and only if it approaches this constant in a monotonically non-decreasing manner. Formally, this property is defined as follows:

\begin{definition}\label{def:1}
Given an input $x$, a constant $c$, and an ordered sequence of functions $\{f_1(x), f_2(x), \dots, f_n(x)\}$ such that $f_1(x) \leq c$, this sequence is defined as updating optimistically toward $c$ if and only if $\forall i$, $f_i(x) \leq c$, and $\forall i < j$, $f_i(x) \leq f_j(x)$.
\end{definition}

Intuitively, optimal joint actions yield the highest possible returns, motivating us to approximate this upper bound by updating our estimations optimistically. However, during the exploration process, we need to determine the optimal action for each individual agent. To shift our focus from joint actions to those of individual agents, we employ a VDN-like additive decomposition for the global return $R(s,a)=\sum{r_i(s,a_i|a_{-i})}$. Benefiting from the IGM property of this additive decomposition, it holds that $\max_{a_{-i}} r_i(s,a_i|a_{-i}) < \max_{a_{-i}} r_i(s,a_i^*|a_{-i})$ for any suboptimal action $a_i\neq a_i^*$. In other words, the maximum value of the return decomposition terms for suboptimal actions is always smaller than the corresponding terms for the true optimal action $a_i^*$. This property serves as a reliable indicator for identifying optimal joint actions' components. We introduce a set of optimistic estimation functions $\{f_i(x)\}$ which our networks approximate to track the maximum value of the return decomposition terms for each agent. For a given input $x=a_i$, the function is updated as follows:

\begin{equation} \label{eq:10}
    \begin{aligned}
        f_i^{t+1}\left( x \right)= \begin{cases}
            f_i^t\left( x \right) +\alpha \left( r_i^t-f_i^t\left( x \right) \right)&		if\,\,r_t>f_i^t\left( x \right)\\
            f_i^t\left( x \right)&		else\\
        \end{cases}
    \end{aligned}
\end{equation}

$\alpha\in [0,1]$ is a predefined hyperparameter representing the learning rate for the estimation function $f_i(x)$, and $r_i^t$ denotes the return decomposition term $r_i(s,a_i|a_{-i})$ at timestep $t$. Because the mathematical properties of these functions are identical across all agents, we simplify $f_i^t$ and $r_i^t$ to $f_t$ and $r_t$ for notational simplicity in the subsequent analysis. Note that the update process defined in Eq. \eqref{eq:10} satisfies the condition for updating optimistically as outlined in Definition \ref{def:1}. This is because each step either strictly increases the estimate when a higher return is observed or leaves it unchanged, ensuring a monotonically non-decreasing sequence. The upper and lower bounds of the estimation function presented in Eq. \eqref{eq:10} can be established using the following lemma:

\begin{lemma}\label{lemma:1}
For the estimation function sequence $\{f_t(x)\}$ defined in Eq. \eqref{eq:10}, if initialized with $f_0(x)\leq r_{max}$, where $r_{max}=\max_t r_t$, then $\forall t\geq0$, it holds that $f_t(x)\leq r_{max}$.
\end{lemma}

\begin{lemma}\label{lemma:2}
    Given an auxiliary function sequence updating in the following form:
    \begin{equation*}
	\begin{aligned}
		f_{t+1}^{'}\left( x \right)= \begin{cases}
		      f_t^{'}\left( x \right) +\alpha \left( r_t-f_t^{'}\left( x \right) \right)&		if\,\,r_t=r_{max}\\
				f_t^{'}\left( x \right)&		else\\
		\end{cases}
	\end{aligned}
    \end{equation*}
    For the estimation function sequence $\{f_t(x)\}$ defined in Eq. \eqref{eq:10}, if initialized with $f_0(x) = f_0^{'}(x)$, then $\forall t\geq 0$, it holds that $f_t(x)\geq f_t^{'}(x)$.
\end{lemma}

Lemma \ref{lemma:1} establishes an upper bound for the estimation function $f_t(x)$, as each update term never exceeds the difference between the current estimate and the maximum return decomposition term. Conversely, Lemma \ref{lemma:2} establishes a lower bound: because the function incorporates any observed return greater than the current estimate, it increases at least as fast as an auxiliary sequence that only updates when the absolute maximum return is observed. Formal proofs for both lemmas are provided in the Appendix \ref{secA}.

The lower bound of $f_t(x)$ is represented using an auxiliary function sequence $f_t^{'}(x)$. To analyze its convergence properties, we evaluate the mathematical expectation of this sequence. Let event $A$ denote the scenario where, during the exploration process, all other agents precisely choose their corresponding optimal action components, yielding the maximum return $r_t = r_{max}$. The probability of this event occurring, given that agent $i$ chooses action $x=a_i$, is expressed as $P(a_{-i}=a_{-i}^{*}|x=a_i)=c$, where $c > 0$. Let $\mathbb{I}(A)$ denote the cumulative number of times event $A$ occurs, and let $f_{\mathbb{I}(A)=k}^{'}(x)$ represent the value of the sequence $f^{'}(x)$ after event $A$ has occurred exactly $k$ times. For $n$ occurrences of event $A$, the closed-form expression for $f_{\mathbb{I}(A)=n}^{'}(x)$ can be derived as follows:

 \begin{equation}\label{eq:11}
    \begin{aligned}
        f_{\mathbb{I}(A)=n}^{'}(x)&=f_{\mathbb{I}(A)=n-1}^{'}(x) + \alpha(r_{max}-f_{\mathbb{I}(A)=n-1}^{'}(x))\\
        &=(1-\alpha)f_{\mathbb{I}(A)=n-1}^{'}(x)+\alpha r_{max}\\
        &=(1-\alpha)^2f_{\mathbb{I}(A)=n-2}^{'}(x)+[\alpha+\alpha(1-\alpha)]r_{max}\\
        &=(1-\alpha)^nf_{\mathbb{I}(A)=0}^{'}(x)+\sum_{k=0}^{n-1}{\alpha(1-\alpha)^kr_{max}}\\
        &=(1-\alpha)^nf_{\mathbb{I}(A)=0}^{'}(x)+\alpha\frac{1-(1-\alpha)^n}{1-(1-\alpha)}r_{max}\\
        &=r_{max}+(1-\alpha)^n[f_0^{'}(x)-r_{max}]
    \end{aligned}
\end{equation}
 
With the help of the closed-form expression of $f_{\mathbb{I}(A)=n}^{'}(x)$. The expectation of $f_t^{'}(x)$ can be computed as follows:

\begin{equation}
    \begin{aligned}
        E[f_t^{'}(x)]&=\sum_{n=0}^t{P(\mathbb{I}(A)=n)f_{\mathbb{I}(A)=n}^{'}(x)}\\
        &=\sum_{n=0}^t{\binom{t}{n}  c^n\left( 1-c \right) ^{t-n}\left( r_{max}+(1-\alpha)^n[f_0^{'}(x)-r_{max}]\right)}\\
        &=r_{max}\sum_{n=0}^t{\binom{t}{n}  c^n\left( 1-c \right) ^{t-n}}+[f_0^{'}(x)-r_{max}]\sum_{n=0}^t{\binom{t}{n} (c-c\alpha)^n\left( 1-c \right) ^{t-n}}\\
        &=r_{max} + [f_0^{'}(x)-r_{max}](1-c\alpha)^t
    \end{aligned}
\end{equation}

Utilizing Lemma \ref{lemma:1} and Lemma \ref{lemma:2}, along with the mathematical expectation of Eq. \eqref{eq:11} and Markov's Inequality, we establish the following theorem:

\begin{theorem}\label{theorem:1}
    Consider the sequence of functions $\{f_t(x)\}$, initialized with $f_0(x)\leq r_{max}$ and updated according to Eq. \eqref{eq:10}. It holds that $\{f_t(x)\}$ converges in probability to $r_{max}$, i.e.,$f_t\left( x \right) \xrightarrow{p}r_{max}$ as $t\rightarrow\infty$.
\end{theorem}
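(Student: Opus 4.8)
The goal is to show that for every $\varepsilon > 0$ one has $P\{|f_t(x) - r_{max}| > \varepsilon\} \to 0$ as $t \to \infty$. The plan is to first collapse the two-sided deviation into a one-sided one, then sandwich $f_t(x)$ from below by the auxiliary sequence $f_t'(x)$ of Lemma~\ref{lemma:2}, and finally control the resulting tail by Markov's inequality using the closed-form expectation already computed for $f_t'(x)$. First I would invoke Lemma~\ref{lemma:1}: since $f_0(x) \le r_{max}$ guarantees $f_t(x) \le r_{max}$ for all $t$, the deviation is automatically one-sided, so that $|f_t(x) - r_{max}| = r_{max} - f_t(x) \ge 0$ and the event of interest becomes $\{r_{max} - f_t(x) > \varepsilon\}$.

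Next I would apply Lemma~\ref{lemma:2}. Because $f_0(x) = f_0'(x)$ yields $f_t(x) \ge f_t'(x)$ for all $t$, it follows that $r_{max} - f_t(x) \le r_{max} - f_t'(x)$, and hence
\begin{equation*}
P\{r_{max} - f_t(x) > \varepsilon\} \le P\{r_{max} - f_t'(x) > \varepsilon\}.
\end{equation*}
The same monotone argument that proves Lemma~\ref{lemma:1} shows $f_t'(x) \le r_{max}$, so $r_{max} - f_t'(x)$ is a non-negative random variable and Markov's inequality applies directly to it.

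Finally I would bound the right-hand side by Markov's inequality,
\begin{equation*}
P\{r_{max} - f_t'(x) > \varepsilon\} \le \frac{E[r_{max} - f_t'(x)]}{\varepsilon} = \frac{r_{max} - E[f_t'(x)]}{\varepsilon},
\end{equation*}
and substitute the expectation $E[f_t'(x)] = r_{max} + [f_0'(x) - r_{max}](1-c\alpha)^t$ derived above. This gives the explicit rate $P\{|f_t(x) - r_{max}| > \varepsilon\} \le \varepsilon^{-1}[r_{max} - f_0'(x)](1-c\alpha)^t$. Since $c \ne 0$ and $\alpha \in (0,1]$ force $0 \le 1 - c\alpha < 1$, the factor $(1-c\alpha)^t$ decays geometrically to $0$, so the whole bound vanishes as $t \to \infty$, establishing convergence in probability.

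The chaining itself is routine once the two lemmas and the expectation formula are in hand; the step that deserves the most care — and the one I expect to be the main obstacle — is the reduction to a one-sided, non-negative random variable. Markov's inequality requires non-negativity, so it is essential to use Lemma~\ref{lemma:1} (and its analogue for the primed sequence) to guarantee both $r_{max} - f_t(x) \ge 0$ and $r_{max} - f_t'(x) \ge 0$ before invoking the bound; without the upper bounds the absolute-value event could not be replaced by the one-sided tail and Markov would not apply. A secondary point to verify is that $1 - c\alpha$ lies strictly below $1$, so that the geometric factor genuinely tends to zero, which relies on the assumptions $c \ne 0$ and $\alpha > 0$.
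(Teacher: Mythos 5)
Your proposal is correct and follows essentially the same route as the paper's own proof: reduce to a one-sided deviation via Lemma 1, sandwich with the primed sequence via Lemma 2, apply Markov's inequality to $r_{max}-f_t'(x)$, and substitute the closed-form expectation so the $(1-c\alpha)^t$ factor drives the bound to zero. Your explicit verification that $r_{max}-f_t'(x)\geq 0$ (needed for Markov) and your final bound $\varepsilon^{-1}[r_{max}-f_0'(x)](1-c\alpha)^t$ are slightly more careful than the paper, which leaves the non-negativity implicit and has a harmless sign slip in its last displayed line.
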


\begin{proof}
    $\forall \varepsilon > 0$, we have
    \begin{equation}
        \begin{aligned}
            P(|r_{max}-f_t(x)|\geq \varepsilon) &= P(r_{max}-f_t(x)\geq \varepsilon)\\
            &\leq P(r_{max}-f_t^{'}(x)\geq \varepsilon)\\
            &\leq \frac{E[r_{max}-f_t^{'}(x)]}{\varepsilon}\\
            &=\frac{r_{max}-E[f_t^{'}(x)]}{\varepsilon}\\
            &=\frac{[f_0^{'}(x)-r_{max}](1-c\alpha)^t}{\varepsilon}\\
            &\xrightarrow{t\rightarrow +\infty}0
        \end{aligned}
    \end{equation}

    Based on the definition of convergence in probability, we have that $f_t(x)$ converges in probability to $r_{max}$, i.e., $f_t\left( x \right) \xrightarrow{p}r_{\max}$.
    
\end{proof}

Theorem \ref{theorem:1} demonstrates that the sequence of estimation functions updated according to Eq. \eqref{eq:10} converges in probability to the maximum value $\max_{a_{-i}} r_i(a_i|a_{-i})$ for any given action $a_i\in A$. This convergence enables the precise identification of the optimal action component $a_i^*$ by comparing the function values across different actions. However, since the team only receives the total reward as feedback in practice, we exploit the additivity property of convergence in probability. Specifically, we update the sum of all individual optimistic action-value networks toward the total cumulative team reward. This architectural design ensures that our Optimistic $\epsilon$-Greedy Exploration strategy can guide the joint policy toward optimal joint actions by maximizing the frequency of selecting actions that yield the highest expected returns.

\subsection{Overall Framework}

\begin{figure*}[t!]
    \centering
    \includegraphics[width=0.95\textwidth]{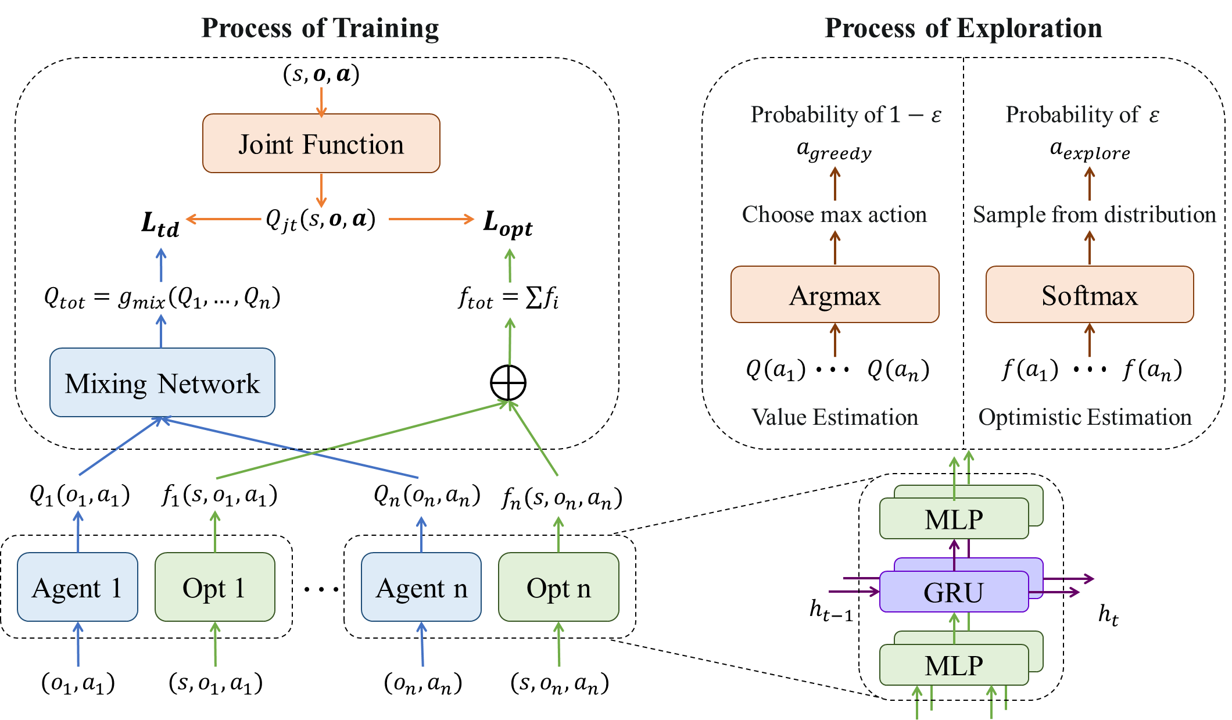}
    \caption{Overall framework of Optimistic $\epsilon$-Greedy Exploration}
    \label{fig:1}
\end{figure*}

In this subsection, we propose the \textbf{Optimistic $\epsilon$-Greedy Exploration strategy} based on the theoretical properties of the optimistic action-value networks established in Section \ref{sec:4.2}. Similar to conventional $\epsilon$-greedy exploration, this strategy uses a probability of $1-\epsilon$ to select the greedy actions, which are the actions estimated to have the highest value by the main action-value networks. The key difference lies in the exploration phase triggered by the probability $\epsilon$: instead of selecting actions uniformly at random, our strategy samples actions based on the outputs of the optimistic action-value networks. This mechanism aims to prioritize the selection of each agent's optimal action, effectively guiding the joint policy toward optimal joint actions. The overall framework of our method is illustrated in Figure \ref{fig:1}.

The right side of Figure \ref{fig:1} depicts the data flow and network architecture of the Optimistic $\epsilon$-Greedy Exploration strategy. Each agent is equipped with two distinct neural networks: the \textbf{Agent} network and the \textbf{Opt} network. The \textbf{Agent} network computes the standard expected return $Q_i(o_i,a_i)$ for each agent, while the \textbf{Opt} network calculates the optimistic action-value estimation $f_i(s,o_i,a_i)$. Both networks take the agent's local observation as input. Crucially, we additionally incorporate the global state $s$ into the input of the Opt network to assist it in robustly identifying the maximum achievable returns. Internally, an MLP first extracts features from the inputs, followed by a GRU to integrate historical trajectory information into the hidden representation. Finally, this representation is processed by another MLP to output the final value estimations for each available action.

During exploration, each agent selects the greedy action $\hat{a}^*$ which maximizes $Q_i(o_i,a_i)$ with a probability of $1-\epsilon$, and samples exploratory actions based on the outputs $f_i(s,o_i,a_i)$ of its Opt network with a probability of $\epsilon$.  While any monotonic function can theoretically be used to map these optimistic estimations to a distribution over the action space, we adopt the Softmax function for this purpose. Specifically, recognizing that the scale of expected returns can vary significantly across different environments, we introduce a temperature normalization mechanism, denoted as $tempNorm(\cdot)$, to $f_i(s,o_i,a_i)$ to enhance its robustness. During this normalization process, all values are proportionally scaled to a range $[0, \beta]$, where the scaling bound $\beta$ is initialized to 0 and linearly increases over time to a predefined hyperparameter. This approach enables the algorithm to perform near-uniform random exploration during the initial phases of training, effectively circumventing the early-stage instability inherent to neural networks. As the Opt networks converge and become more adept at identifying the optimal action of each agent, this normalized range is progressively annealed. This ensures that these optimal actions are sampled with increasingly higher frequencies. In summary, each agent $i$ determines its action selection probabilities during exploration according to the following formulation:

\begin{equation} \label{eq:12}
    \begin{aligned}
        \pi_i \left( a_i|s \right) =\begin{cases}
            1-\epsilon \,\,+\epsilon\cdot Softmax(tempNorm(f_i(a_i)))&		if\,\,a_i=\hat{a}_i^*\\
            \epsilon\cdot Softmax(tempNorm(f_i(a_i)))&		if\,\,a\ne \hat{a}_i^*\\
        \end{cases}
    \end{aligned}
\end{equation}

\vspace{\baselineskip}

Note that when Eq. \eqref{eq:12} is employed as the exploration strategy, the following theorem holds:

\begin{theorem}
The Optimistic $\epsilon$-Greedy Exploration strategy defined in Eq. \eqref{eq:12} samples the optimal action of each agent with a higher probability than the traditional $\epsilon$-greedy exploration strategy defined in Eq. \eqref{eq:3}.
\end{theorem}

Intuitively, while both strategies share the same probability of choosing the optimal actions of each agent when exploiting the maximum estimated values, the Optimistic $\epsilon$-Greedy Exploration strategy yields a higher probability of picking the optimal actions when actions are selected through sampling. A formal proof of this theorem is provided in the Appendix \ref{secA}.

The left side of Figure \ref{fig:1} illustrates the training process for these neural networks. During the forward computation, the individual values $Q_i(o_i,a_i)$ are aggregated into the global action-value estimation $Q_{tot}(\mathbf{o},\mathbf{a})$ through a Mixing network, which can be any monotonic aggregation function susceptible to the value underestimation problem, as formulated in Eq. \eqref{eq:add1}. Meanwhile, the individual optimistic action-value estimations $f_i(s,o_i,a_i)$ are aggregated strictly through summation to produce the global optimistic estimation $f_{tot}(s,\mathbf{o},\mathbf{a})$, as defined in Eq. \eqref{eq:add2}:

\begin{equation}\label{eq:add1}
    Q_{tot}(\mathbf{o},\mathbf{a}) = g(Q_1(o_1,a_1),...,Q_n(o_n,a_n))
\end{equation}

\begin{equation}\label{eq:add2}
    f_{tot}(s,\mathbf{o},\mathbf{a}) = \sum_{i=1}^n{f_i(s,o_i,a_i)}
\end{equation}

The parameters of the \textbf{Agent} networks are optimized by minimizing the standard TD loss, as shown in Eq. \eqref{eq:13}. To ensure that the \textbf{Opt} networks are updated optimistically in a monotonically non-decreasing manner, a specifically designed weighted optimistic loss function $L_{opt}$ is employed, as defined in Eq. \eqref{eq:14}. Specifically, the weight $w$ is set to 1 when $y_{target} > f_{tot}(s,\mathbf{o},\mathbf{a})$, and to 0 when $y_{target} \leq f_{tot}(s,\mathbf{o},\mathbf{a})$. This masking mechanism ensures the loss function is masked whenever the optimistic estimation exceeds the target, thereby enabling monotonic non-decreasing updates. However, considering that the $y_{target}$ includes neural network-based estimations, the weight $w$ is set to a hyperparameter instead of 0 in practical implementation to enhance the robustness of the optimistic action-value network.

\begin{equation} \label{eq:13}
L_{td}=[Q_{tot}(\mathbf{o},\mathbf{a})-y_{target}]^2
\end{equation}
	
\begin{equation} \label{eq:14}
L_{opt}=w[f_{tot}(s,\mathbf{o},\mathbf{a})-y_{target}]^2
\end{equation}

\vspace{\baselineskip}

Building on this foundation, we introduce an unconstrained neural network to better estimate the expected return. This network takes the state, observation, and joint action as inputs, estimating the expected return $Q_{jt}(s,\mathbf{o},\mathbf{a})$ under the given conditions. Such a design enables the neural network to explore a broader parameter space to map inputs to the expected return. The TD target is calculated using Eq. \eqref{eq:15}. Inspired by the Double Q-learning approach, we leverage the estimations corresponding to the optimal joint action $\hat{\textbf{a}}_{t+1}^* = \arg\max_{\textbf{a}_{t+1}}Q_{tot}(\mathbf{o}_{t+1},\mathbf{a}_{t+1})$ to approximate the maximum return for the next state. This design avoids an exhaustive search over the exponentially large joint action space. $Q_{jt}(s,\mathbf{o},\mathbf{a})$ is updated using the TD loss defined in Eq. \eqref{eq:16}.

\begin{equation} \label{eq:15}
    y_{target} = \textcolor{black}{r_t}+\gamma Q_{jt}(s_{t+1},\mathbf{o}_{t+1},\hat{\textbf{a}}_{t+1}^*)
\end{equation}
	
\begin{equation} \label{eq:16}
    L_{jt}=[Q_{jt}(s,\mathbf{o},\mathbf{a})-y_{target}]^2
\end{equation}

\vspace{\baselineskip}

The overall objective of parameter optimization is to minimize the combined loss defined in Eq. \eqref{eq:13}, Eq. \eqref{eq:14}, and Eq. \eqref{eq:16}, as formulated in Eq. \eqref{eq:17}.

\begin{equation} \label{eq:17}
    L=L_{td}+L_{opt}+L_{jt}
\end{equation}

\vspace{\baselineskip}

\textbf{Discussion:} Our notion of optimism is closely related to prior works such as ParetoAC and WQMIX. By introducing monotonically non-decreasing optimistic action-value networks to estimate maximum expected returns, our exploration strategy equips conventional monotonic value decomposition methods with the ability to focus on optimal joint actions without requiring complex architectural modifications. Compared to ParetoAC, which requires evaluating all possible joint action combinations of other agents to compute Pareto optimality, resulting in an exponential computational complexity of $\mathcal{O}(|A|^n)$, our approach identifies the optimal action components of optimal joint actions for each individual agent through mathematically guaranteed monotonically non-decreasing networks. This reduces the complexity to $\mathcal{O}(|A|)$ per agent, achieving significantly better scalability. Furthermore, compared to WQMIX, which explicitly applies optimistic weighting directly to the loss function of the main action-value network, our approach applies optimism exclusively to independent networks used solely to determine action selection probabilities during exploration. This decouples exploration from the execution policy, mitigating instability during the training process and thereby accelerating convergence. Finally, compared to conventional exploration strategies, such as $\epsilon$-greedy or noise-based exploration, our Optimistic $\epsilon$-Greedy Exploration strategy guides agents with more accurate targets, equipping the framework with the capability to escape from suboptimal joint actions. In light of these theoretical advantages, the subsequent experimental section will empirically validate how integrating conventional monotonic value decomposition with our Optimistic $\epsilon$-Greedy Exploration strategy mitigates the shortcomings of existing approaches. Specifically, we will demonstrate that our method can successfully identify optimal joint actions, thereby achieving higher final returns, superior win rates, and faster convergence, all while retaining algorithmic simplicity.

\section{Experimental Evaluation}\label{sec:5}

In this section, we integrate the \textbf{Optimistic $\epsilon$-Greedy Exploration} strategy into QMIX and VDN, two representative conventional monotonic value decomposition algorithms, creating enhanced versions referred to as OPT-QMIX and OPT-VDN, respectively. To evaluate the efficacy of these enhanced algorithms, we conduct experiments across multiple environments, aiming to address the following key research questions:

\begin{itemize}
    \item \textbf{Question (1):} Does the Optimistic $\epsilon$-Greedy Exploration strategy effectively resolve the value underestimation problem inherent in conventional monotonic value decomposition frameworks?  
    \item \textbf{Question (2):} How does the Optimistic $\epsilon$-Greedy Exploration strategy compare with methods that enhance network architectures or employ manual weighting schemes, particularly in complex environments? 
    \item \textbf{Question (3):} Can the Optimistic $\epsilon$-Greedy Exploration strategy outperform other widely used exploration strategies in existing MARL frameworks?  
\end{itemize}

We evaluate our approach across three environments of escalating complexity, each presenting inherent value underestimation challenges: Matrix Games in Section \ref{subsection5.2}, Predator and Prey in Section \ref{subsection5.3}, and the StarCraft Multi-Agent Challenge (SMAC) in Section \ref{subsection5.4}. To demonstrate the fundamental effectiveness of our method, we first compare our variants, OPT-QMIX and OPT-VDN, against their foundational baselines, QMIX \cite{rashid2020monotonic} and VDN \cite{sunehag2017value}. Subsequently, we benchmark against advanced state-of-the-art methods, specifically OW-QMIX/CW-QMIX \cite{rashid2020weighted} and QTRAN \cite{son2019qtran}. OW-QMIX and CW-QMIX are selected because they also employ weighted updates but apply them directly to the TD loss of the main action-value network, whereas QTRAN is included to represent approaches that utilize complex network architectures to construct lower bounds. Furthermore, an ablation study in Section \ref{subsection5.5} leverages the QMIX framework to compare our strategy against exploration strategies commonly used in other MARL algorithms, including conventional $\epsilon$-greedy \cite{sutton1998reinforcement}, noise-based \cite{wang2022noise}, and intrinsic reward-driven exploration \cite{zhou2025double}. Finally, Section \ref{subsection5.6} presents a qualitative analysis that empirically demonstrates how our method successfully steers the joint policy toward optimal joint actions by increasing the sampling frequencies of these optimal joint actions.

\subsection{Experiment Setup}\label{subsection5.1}

To ensure consistency in code implementation, we utilized the open-source PyMARL framework \cite{samvelyan2019starcraft} to conduct all our experiments. Each experiment was independently run over 5 different random seeds. The median of the results across these seeds is reported as the primary evaluation metric, with the interquartile range (25\%-75\%) visualized as shaded regions in the resulting plots. All experiments were carried out on a server equipped with an Intel Core i5-12400 processor, 64 GB of RAM, and an NVIDIA GeForce RTX 4060 Ti GPU with 16 GB VRAM.

To ensure fair comparisons, all algorithms were configured with identical common hyperparameters, closely following the default settings in PyMARL. Specifically, the capacity of the replay buffer for all algorithms was set to 5,000 episodes during training. At each training step, batches of 32 episodes were sampled from the buffer to train the neural networks. The network parameters were optimized using the RMSprop optimizer with a learning rate of $5\times 10^{-4}$. The discount factor $\gamma$ for calculating the TD target was set to 0.99.

The individual agent network architectures were kept uniform across all algorithms. Each agent's network consisted of an MLP with a hidden dimension of 64, a GRU with a hidden dimension of 64, and a final MLP whose output dimension corresponded to the number of available actions. During evaluation, all algorithms greedily selected actions based on the highest estimated values from their respective agent networks. Unless stated otherwise, all baseline algorithms employed the conventional $\epsilon$-greedy strategy for exploration. To rigorously evaluate the impact of exploration on algorithmic performance, we extended the $\epsilon$-annealing period, following practices from prior work \cite{rashid2020weighted}. Specifically, the value of $\epsilon$ is kept constant at 1.0 in the Matrix Games. In the Predator and Prey environment, $\epsilon$ is linearly annealed from 1.0 to 0.05 over 200k timesteps. In the SMAC environments, $\epsilon$ is linearly annealed from 1.0 to 0.05 over 1M timesteps.

OPT-QMIX, OW-QMIX, and CW-QMIX employ the same unconstrained joint action-value network $Q_{jt}$ to estimate expected returns. For the Matrix Games and the Predator and Prey environment, the weight $w$ in these algorithms is set to 0.01, while $w$ for SMAC is set to 0.5. For our Optimistic $\epsilon$-Greedy Exploration strategy, the temperature normalization range was set to $[0,\beta]$. The scaling bound $\beta$ was linearly increased from 0 to 2 over the first 20k timesteps in the Predator and Prey environment, and from 0 to 1 over the first 50k timesteps in SMAC. Because the Matrix Games serve as simplified toy environments, temperature normalization was not employed there.

\subsection{Matrix Games}\label{subsection5.2}

\begin{table*}[t]
    \centering
    \setlength{\tabcolsep}{5.5pt}
    \begin{subtable}[c]{0.25\textwidth}
        \centering
            \begin{tabular}{|c|c|c|c|}
				\hline
				    & $a_1$ & $a_2$ & $a_3$ \\
				\hline
				$a_1$ & 8 & -12 & -12 \\
				\hline
				$a_2$ & -12 & 0 & 0 \\
				\hline
				$a_3$ & -12 & 0 & 0 \\
				\hline
            \end{tabular}
            \caption{Payoff Matrix}
    \end{subtable}
    \hfill %
    \begin{minipage}[c]{0.7\textwidth}
        \begin{subtable}[b]{0.45\linewidth}
            \centering
            \begin{tabular}{|c|c|c|c|}
				\hline
				  & $a_1$ & $a_2$ & $a_3$ \\
				\hline
				$a_1$ & -3.48 & -3.21 & -3.09 \\
				\hline
				$a_2$ & -3.38 & -3.11 & -2.99 \\
				\hline
				$a_3$ & -2.95 & -2.68 & \textcolor{red}{-2.56} \\
				\hline
            \end{tabular}
            \caption{VDN}
        \end{subtable}
        \hfill
        \begin{subtable}[b]{0.45\linewidth}
            \centering
            \begin{tabular}{|c|c|c|c|}
				\hline
				  & $a_1$ & $a_2$ & $a_3$ \\
				\hline
				$a_1$ & \textcolor{blue}{7.8} & -0.36 & -0.24 \\
				\hline
				$a_2$ & -0.07 & -8.23 & -8.09 \\
				\hline
				$a_3$ & 0.07 & -8.09 & -7.97 \\
				\hline
            \end{tabular}
            \caption{OPT-VDN}
        \end{subtable}

        \vspace{0.2cm} %

        \begin{subtable}[t]{0.45\linewidth}
            \centering
            \begin{tabular}{|c|c|c|c|}
				\hline
				  & $a_1$ & $a_2$ & $a_3$ \\
				\hline
				$a_1$ & -7.76 & -7.76 & -7.76 \\
				\hline
				$a_2$ & -7.76 & \textcolor{red}{0.04} & 0.02 \\
				\hline
				$a_3$ & -7.76 & 0.02 & 0.01 \\
				\hline
            \end{tabular}
            \caption{QMIX}
        \end{subtable}
        \hfill
        \begin{subtable}[t]{0.45\linewidth}
            \centering
            \begin{tabular}{|c|c|c|c|}
				\hline
				  & $a_1$ & $a_2$ & $a_3$ \\
				\hline
				$a_1$ & \textcolor{blue}{7.8} & -6.38 & -6.32 \\
				\hline
				$a_2$ & -6.22 & -6.40 & -6.40 \\
				\hline
				$a_3$ & -6.11 & -6.40 & -6.39 \\
				\hline
            \end{tabular}
            \caption{OPT-QMIX}
        \end{subtable}
    \end{minipage}
    \caption{Value Estimations of Different Joint Actions in Matrix Game\label{table:1}}
\end{table*}

We initiate our experiments with the Matrix Games environment, a simple toy environment consisting of multiple agents and a payoff matrix. Each trajectory spans a single timestep, during which each agent selects an action, and the team receives the reward corresponding to the joint action from the payoff matrix. The Matrix Games environment offers significant flexibility, allowing researchers to simulate various reward distributions by adjusting the payoff matrix. Table \ref{table:1}(a) revisits the matrix game instance used in Section \ref{sec1} to illustrate the problem of systematic value underestimation. Under this distribution, if any agent fails to choose its optimal action (i.e., $a_1$), the team reward decreases sharply. Such a drastic reduction in reward can cause algorithms to underestimate the expected returns of optimal joint actions, leading to convergence on suboptimal policies.

Table \ref{table:1}(b) and Table \ref{table:1}(d) respectively illustrate the value estimations of each action combination for the conventional monotonic value decomposition methods VDN and QMIX. Both methods fail to correctly identify optimal joint actions and instead converge to suboptimal joint actions. In contrast, Table \ref{table:1}(c) and Table \ref{table:1}(e) present the results when our Optimistic $\epsilon$-Greedy Exploration strategy is integrated into VDN and QMIX. The corresponding variants, OPT-VDN and OPT-QMIX, successfully focus their estimations on optimal joint actions. These results demonstrate that our method effectively empowers algorithms to escape suboptimal joint actions and successfully converge toward optimal joint actions.

\begin{table*}[t]
    \centering
    \captionsetup[subtable]{labelformat=empty}
    \begin{tabular}{ccc}
        \begin{subtable}[t]{0.3\textwidth}
            \centering
            \begin{tabular}{|c|c|c|c|}
				\hline
				    & $a_1$ & $a_2$ & $a_3$ \\
				\hline
				$a_1$ & 8 & 0 & 0 \\
				\hline
				$a_2$ & 0 & 0 & 0 \\
				\hline
				$a_3$ & 0 & 0 & 0 \\
				\hline
            \end{tabular}
            \caption{Matrix(a)}
        \end{subtable}
        &
        \begin{subtable}[t]{0.3\textwidth}
            \centering
            \begin{tabular}{|c|c|c|c|}
				\hline
				    & $a_1$ & $a_2$ & $a_3$ \\
				\hline
				$a_1$ & 8 & -12 & -12 \\
				\hline
				$a_2$ & -12 & 0 & 0 \\
				\hline
				$a_3$ & -12 & 0 & 0 \\
				\hline
            \end{tabular}
            \caption{Matrix(b)}
        \end{subtable}
        &
        \begin{subtable}[t]{0.3\textwidth}
            \centering
            \begin{tabular}{|c|c|c|c|}
				\hline
				    & $a_1$ & $a_2$ & $a_3$ \\
				\hline
				$a_1$ & 4 & -12 & -12 \\
				\hline
				$a_2$ & -12 & 0 & 0 \\
				\hline
				$a_3$ & -12 & 0 & 0 \\
				\hline
            \end{tabular}
            \caption{Matrix(c)}
        \end{subtable}
    \end{tabular}
    \caption{Payoff Matrices Used in Experimental Evaluation\label{table:2}}
\end{table*}

\begin{figure*}[t!]
    \centering
    \includegraphics[width=0.95\textwidth]{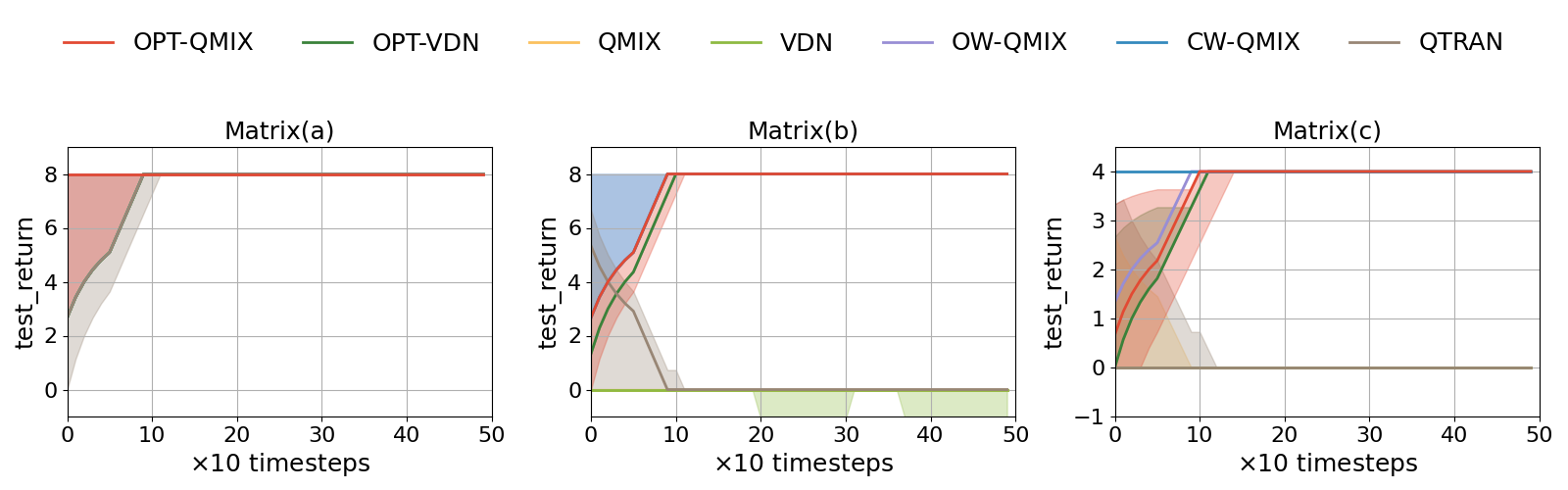}
    \caption{Experimental Results of Matrix Game}
    \label{fig:Matrix}
\end{figure*}

To further investigate the performance of our proposed variants across diverse reward distributions and compare them against other enhanced algorithms, we introduce two additional payoff matrices. The three payoff matrices utilized in our experiments, detailed in Table \ref{table:2}, exhibit progressive levels of convergence difficulty. Specifically, Matrix (a) presents minimal risk of converging to suboptimal joint actions; Matrix (b) introduces negative rewards for uncoordinated actions, creating a risk of convergence to suboptimal joint actions; and Matrix (c) further exacerbates this risk by decreasing the rewards of optimal joint actions. 

Figure \ref{fig:Matrix} illustrates the average test returns of all algorithms across the corresponding matrices over training time. The results demonstrate that our variants successfully converge to optimal joint actions across all three matrices. While the baseline QMIX and VDN algorithms manage to converge on the simpler Matrix (a), they easily fall into suboptimal joint actions on the more challenging Matrices (b) and (c). Regarding the enhanced baseline algorithms, both OW-QMIX and CW-QMIX successfully identify optimal joint actions by explicitly weighting the TD loss. Conversely, although QTRAN is theoretically capable of convergence, its reliance on a complex neural network architecture for lower bound estimation prevents it from converging within the time window depicted in Figure \ref{fig:Matrix}. These empirical results confirm that our method is as effective as weighting-based approaches in guiding the joint policy toward optimal joint actions, while achieving a significantly faster convergence rate than methods reliant on complex architectural modifications.

\subsection{Predator and Prey}\label{subsection5.3}

To evaluate the effects of the \textbf{Optimistic $\epsilon$-Greedy Exploration} strategy on value estimation, we employ the Predator and Prey environment, a classic MARL testbed demanding precise spatial and temporal coordination, as shown in Figure \ref{fig:2}(a). In this setting, predators act as learning agents equipped with $5 \times 5$ local observations and must choose between ``move'' and ``capture'' actions. Crucially, a successful capture requires simultaneous execution by at least two predators, yielding a positive team reward of 10, whereas a unilateral capture attempt results in prey escape and a hyperparameter-defined negative penalty. Similar to the Matrix Games environment, this setting features highly ``risky'' optimal joint actions: while successful coordination yields the maximum return, any unilateral deviation by a teammate triggers a steep penalty. Consequently, this scenario poses substantial challenges for accurate value estimation within conventional value decomposition frameworks, as the severe penalties for miscoordination can easily drive agents to underestimate the true values of optimal joint actions.

\begin{figure}[t]
    \centering
    \begin{subfigure}[b]{0.25\linewidth}
        \centering
        \includegraphics[width=\linewidth]{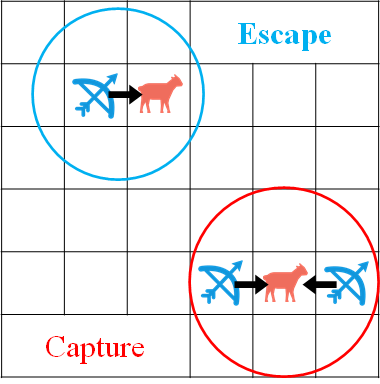}
        \caption{Diagram of Predator and Prey}
        \label{fig:left}
    \end{subfigure}
    \hspace{0.05\linewidth} %
    \begin{subfigure}[b]{0.65\linewidth}
        \centering
        \includegraphics[width=\linewidth]{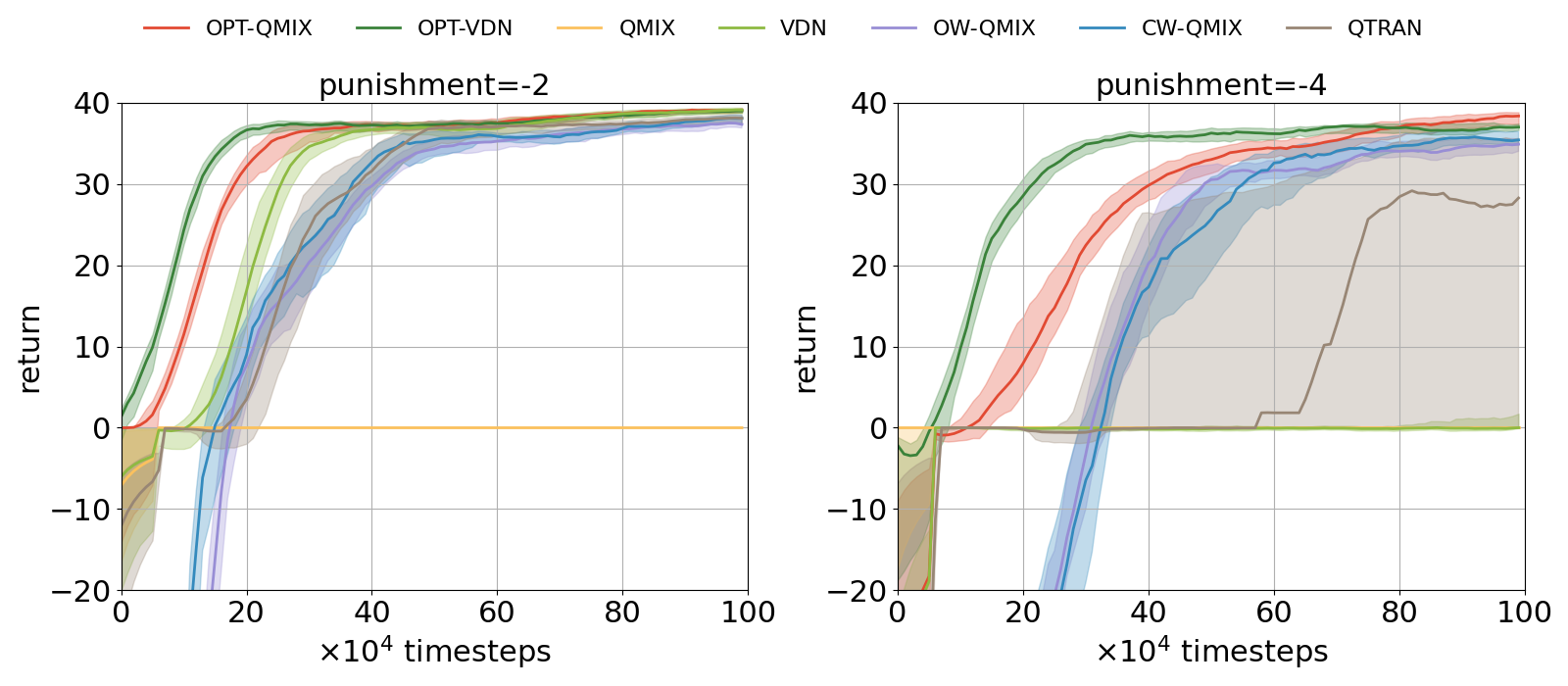}
        \caption{Average Returns of Different Algorithms on Predator and Prey Over Time}
        \label{fig:right}
    \end{subfigure}
    \caption{Experimental Results of Predator and Prey}
    \label{fig:2}
\end{figure}

We evaluate the algorithms under two distinct configurations featuring miscapture penalties of -2 and -4. A penalty of -4 imposes a strictly higher demand on effective coordination, significantly increasing the likelihood of diminished team rewards for misaligned actions. Figure \ref{fig:2}(b) illustrates the average returns achieved by the respective algorithms throughout the training process. As the results indicate, our proposed variants, OPT-VDN and OPT-QMIX, consistently converge to the highest final returns with markedly faster convergence rates across all settings. This superior performance demonstrates that our method reliably overcomes reward ambiguities and successfully steers the joint policy toward optimal joint actions. In contrast, the conventional baseline algorithms encounter severe difficulties with the inconsistent rewards generated by the same action under different circumstances. QMIX struggles to interpret these conflicting signals and consequently converges toward suboptimal joint actions (specifically, a state of inaction), settling for a return of zero. VDN achieves marginal robustness owing to its simple structural design and manages to learn the optimal joint policy under the milder penalty of -2, but it completely fails to identify optimal joint actions when faced with the stricter penalty of -4. This behavior highlights the fundamental limitations of conventional monotonic value decomposition in environments characterized by non-monotonic action-reward relationships. Enhanced algorithms such as OW-QMIX, CW-QMIX, and QTRAN demonstrate the feasibility of overcoming ambiguous rewards. However, this capability comes at the cost of substantially slower convergence rates compared to our exploration strategy. Specifically, OW-QMIX and CW-QMIX suffer from reduced convergence speeds because they apply weighting mechanisms directly to the TD loss of the main action-value network, which can overly restrict the policy optimization process. Meanwhile, QTRAN experiences delayed convergence due to the optimization difficulties and computational overhead introduced by its highly complex network architecture.

\subsection{StarCraft Multi-Agent Challenge}\label{subsection5.4}

We evaluate our proposed variants OPT-VDN and OPT-QMIX in the StarCraft Multi-Agent Challenge (SMAC) environment \cite{samvelyan2019starcraft} to examine their effectiveness in tackling complex scenarios that require precise multi-agent coordination. SMAC serves as a prominent benchmark for MARL, built upon the popular real-time strategy game StarCraft II. It offers a diverse set of team-based combat simulations, where multiple agents must cooperate to eliminate the opposing team and secure victory. In these simulations, the agents of the allied team are controlled by MARL algorithms, while the opposing enemy units operate under the control of the game's built-in heuristic algorithm. The experiments in this subsection are conducted across six distinct maps: ``3s5z'', ``1c3s5z'', ``5m\_vs\_6m'', ``8m\_vs\_9m'', ``10m\_vs\_11m'', and ``MMM2''. These maps cover a wide range of difficulty levels, officially classified as ``Easy'', ``Hard'', and ``Super Hard'' within the SMAC benchmark. Table \ref{table:3} provides detailed information for each map. These scenarios present varying levels of complexity, scaling from basic coordination challenges to highly advanced combats that demand sophisticated joint policies to handle asymmetric team compositions and adversarial dynamics.

\begin{table*}[t]
\centering %
\begin{tabular}{|c|c|c|c|c|} %
\hline
\textbf{Name} & \textbf{Ally Units} & \textbf{Enemy Units} & \textbf{Difficulty}\\ %
\hline
\hline
3s5z & 3 Stalkers \& 5 Zealots & 3 Stalkers \& 5 Zealots & Easy \\ %

1c3s5z & \makecell{1 Colossi \& 3 Stalkers \\ \& 5 Zealots} & \makecell{1 Colossi \& 3 Stalkers\\ \& 5 Zealots} & Easy\\ %

5m\_vs\_6m & 5 Marines & 6 Marines  & Hard\\ %

8m\_vs\_9m & 8 Marines & 9 Marines  & Hard\\ %

10m\_vs\_11m & 10 Marines & 11 Marines  & Hard\\ %

MMM2 & \makecell{1 Medivac \& 2 Marauders\\ \& 7 Marines} & \makecell{1 Medivac \& 3 Marauders \\ \& 8 Marines} & Super Hard\\  %
\hline
\end{tabular}
\caption{Compositions and Difficulty Ratings of SMAC}
\label{table:3}
\end{table*}

\begin{figure*}[t!]
    \centering
    \includegraphics[width=1\textwidth]{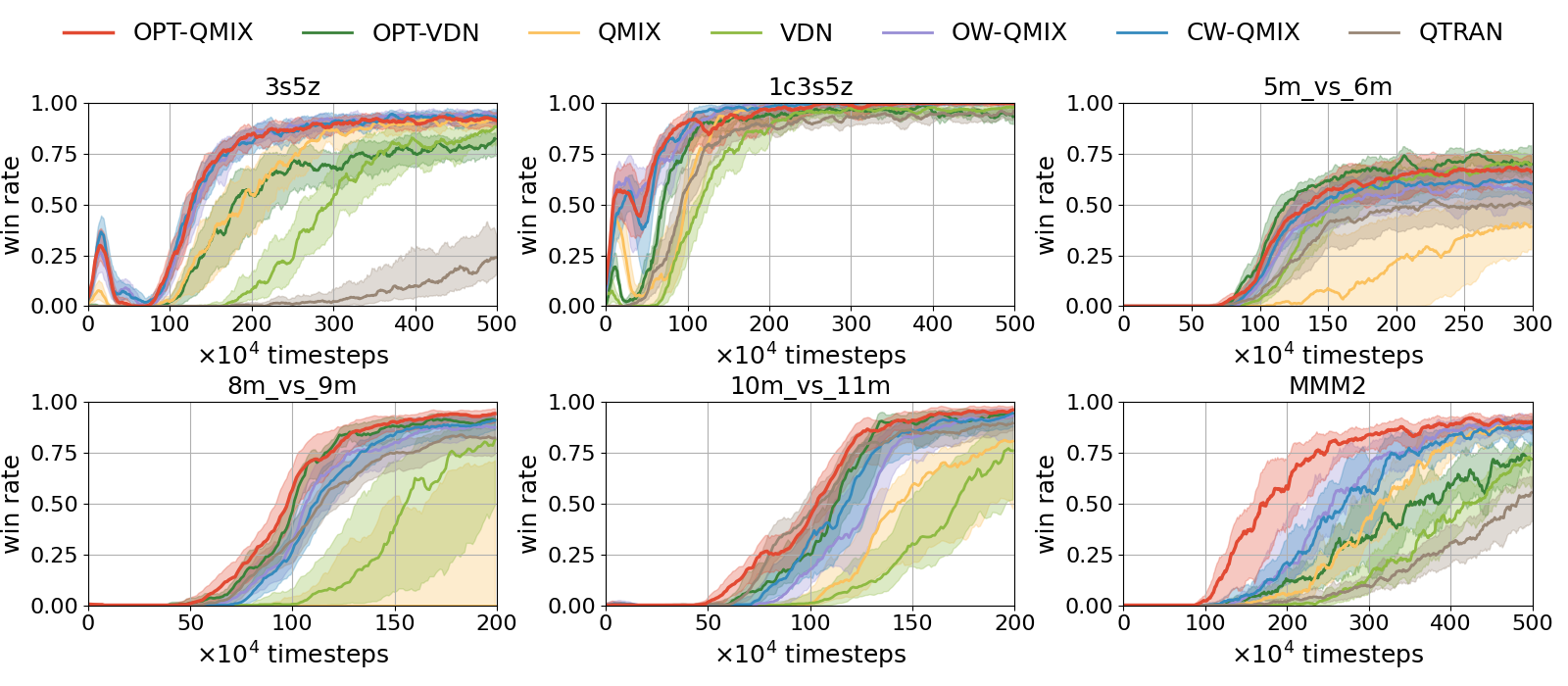}
    \caption{Experimental Results of StarCraft Multi-Agent Challenge}
    \label{fig:3}
\end{figure*}

Figure \ref{fig:3} illustrates the win rates of various algorithms across different maps throughout the training process. The experimental results reveal notable differences in performance among the algorithms. Specifically, VDN and QMIX exhibit limited robustness to extended exploration. This vulnerability primarily arises because their conventionally applied exploration strategies lack the capability to guide the joint policy out of suboptimal joint actions, leaving their monotonic aggregation functions highly susceptible to the value underestimation problem. This limitation becomes particularly evident on maps officially classified as ``Hard'' or ``Super Hard'', where the need for nuanced coordination and adaptive joint policies is significantly greater. In comparison, our proposed variants, OPT-VDN and OPT-QMIX, achieve higher win rates and lower variance across nearly all maps compared to VDN and QMIX, while demonstrating superior robustness to extended exploration. Without altering the core algorithmic framework, OPT-VDN and OPT-QMIX successfully mitigates the value underestimation problem inherent to monotonic aggregation functions by incorporating the Optimistic $\epsilon$-Greedy Exploration strategy. This addition actively drives the joint policy toward optimal joint actions within potentially high-return regions of the policy space. Compared to other enhanced algorithms such as OW-QMIX, CW-QMIX, and QTRAN, our superior variant OPT-QMIX achieves higher or comparable final win rates across all maps. Furthermore, its performance curves are consistently positioned to the upper left of its competitors, demonstrating a distinct advantage in convergence speed. This advantage can be attributed to our strategy's focus on leveraging optimism to guide action selection probabilities during exploration, rather than relying solely on complex architectural enhancements such as QTRAN or manual weighting mechanisms applied directly to the TD loss such as OW/CW-QMIX. These findings provide compelling evidence for the overall effectiveness of the Optimistic $\epsilon$-Greedy Exploration strategy.

\subsection{Ablation Experiments Across Different Exploration Strategies}\label{subsection5.5}

In this subsection, we compare our \textbf{Optimistic $\epsilon$-Greedy Exploration} strategy with other widely used stochastic exploration strategies in MARL to highlight the unique ability of our approach to guide the joint policy toward optimal joint actions and achieve superior performance. To validate this, we use the QMIX framework as a baseline and replace its exploration strategy with our Optimistic $\epsilon$-Greedy Exploration strategy, the conventional $\epsilon$-greedy exploration strategy, a noise-based exploration strategy, and an intrinsic reward-based exploration strategy (curiosity-driven exploration). The conventional $\epsilon$-greedy exploration strategy selects actions uniformly at random with a probability of $\epsilon$, encouraging exploration regardless of their action-value estimations. The noise-based exploration strategy incorporates additional noise into the value estimates for each action and consistently selects the action with the highest noise-augmented value. The intrinsic reward-based exploration strategy introduces an extra reward signal during training, based on the gap between the predicted return and the actual return. This incentivizes agents to explore scenarios where their value estimates are less precise or inaccurate, pushing them toward uncharted areas of the state-action space. We provide the detailed implementation of the above exploration strategies within the QMIX architecture in Appendix \ref{secB}.

\begin{figure*}[t!]
    \centering
    \includegraphics[width=1\textwidth]{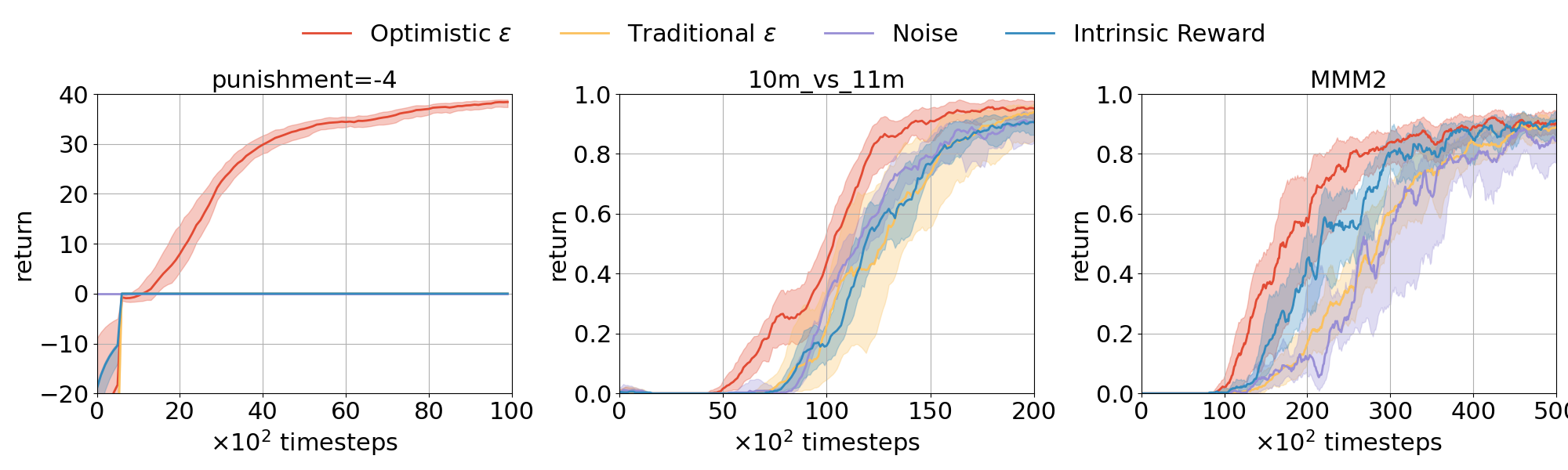}
    \caption{Ablation Experimental Results Across Different Exploration Strategies}
    \label{fig:4}
\end{figure*}

We performed experiments in the Predator and Prey environment, with the miscapture penalty set to -4, as well as in two representative SMAC maps: ``10m\_vs\_11m'' and ``MMM2'', which encompass varying levels of challenge and complexity. Figure \ref{fig:4} presents the average returns and win rates across different exploration strategies in these environments. Remarkably, all stochastic exploration methods, except our Optimistic $\epsilon$-Greedy Exploration strategy, learned suboptimal joint actions in the Predator and Prey environment, essentially resulting in agents ``doing nothing'' and achieving a return of 0. This striking result underscores the unique advantage of our method in effectively steering the joint policy toward optimal joint actions. In the SMAC environments, the conventional $\epsilon$-greedy exploration strategy and the noise-based exploration strategy delivered moderate performance, showing limited capability in navigating the complex state-action dynamics. While the intrinsic reward-based exploration strategy encourages agents to explore states with inaccurate value estimates by providing additional incentives based on the discrepancy between estimated values and actual rewards, this mechanism comes at the cost of increased training instability. Specifically, these supplementary reward signals introduce substantial variance during optimization. In comparison, our Optimistic $\epsilon$-Greedy Exploration strategy demonstrated notable advantages, achieving higher final performance, faster convergence rates, and better training stability. These ablation experiments further support that, compared to other stochastic exploration strategies, the Optimistic $\epsilon$-Greedy Exploration strategy aligns more effectively with monotonic value decomposition paradigms. These findings reinforce the effectiveness and adaptability of our approach in addressing the value underestimation problem and other challenges inherent in MARL tasks.

\begin{figure*}[t!]
    \centering
    \includegraphics[width=0.95\textwidth]{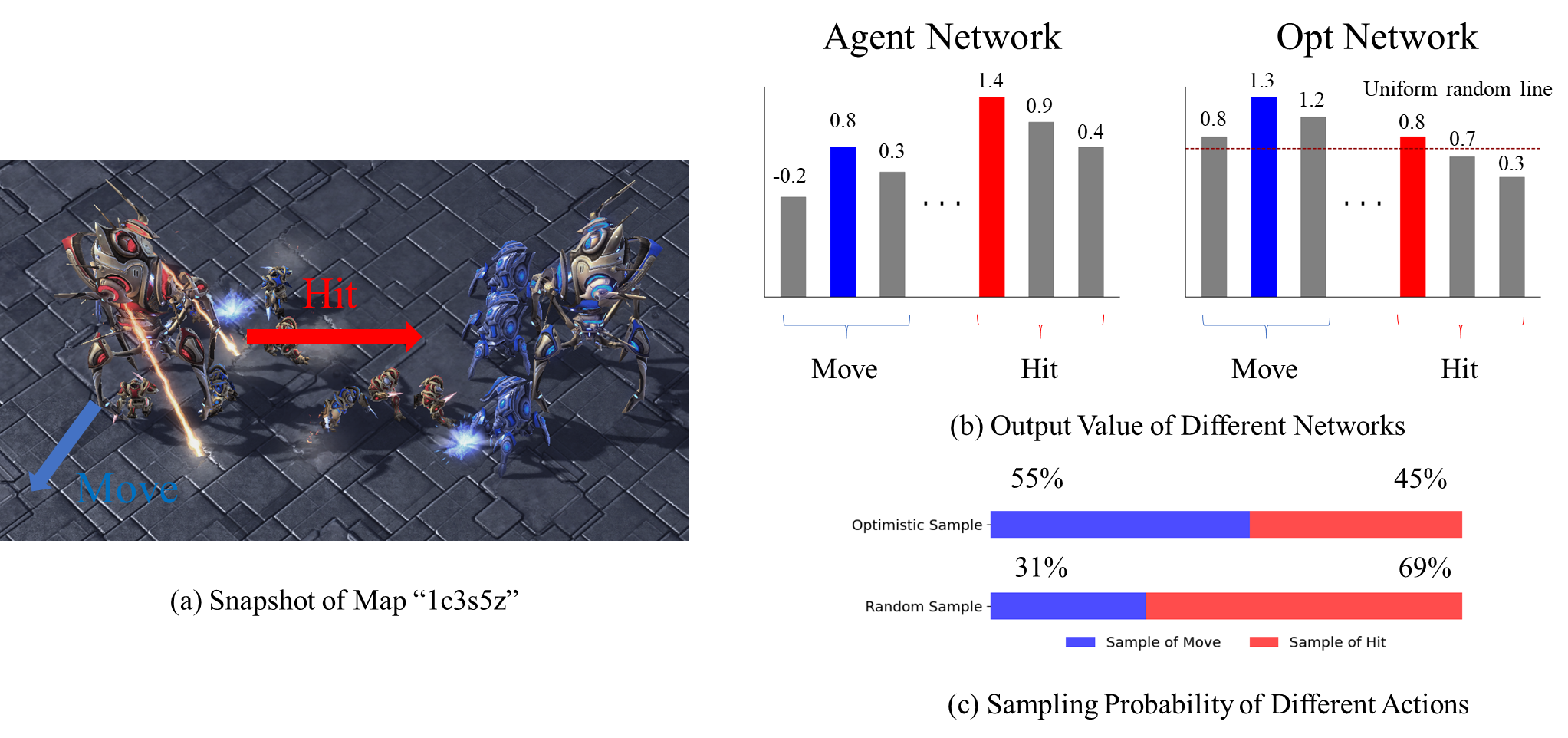}
    \caption{Visualized Explanation of The Optimistic $\epsilon$-Greedy Exploration Strategy}
    \label{fig:5}
\end{figure*}

\subsection{Qualitative Analysis}\label{subsection5.6}

In this subsection, we present a qualitative analysis of our \textbf{Optimistic $\epsilon$-Greedy Exploration} strategy to better illustrate its underlying operational mechanisms. For this analysis, we utilize the SMAC map ``1c3s5z'' as the test scenario. In this environment, the Colossus serves as a crucial allied unit due to its large size, high damage output, and ability to deliver area-of-effect attacks. However, the Colossus's prominence comes with significant vulnerabilities; enemy ranged units can easily focus their fire on it, often leading to its early elimination and placing the allied team at a substantial disadvantage. To mitigate this, an effective policy for the Colossus involves a ``hit-and-run'' strategy: repositioning between attacks to maintain its survivability and maximize its damage contribution over time. Unfortunately, the reward structure of SMAC introduces challenges to learning this tactic. Units receive positive rewards primarily for successfully dealing damage to opponents, while no immediate reward is provided for movement actions. This sparse reward signal often causes conventional baselines to undervalue the ``move'' action, severely limiting the agent's ability to execute efficient positioning strategies.

Figure \ref{fig:5} illustrates a snapshot of our OPT-QMIX method on the ``1c3s5z'' map, trained for 500k timesteps using the parameter settings detailed in Section \ref{subsection5.1}, alongside a subset of the corresponding network output values for the Colossus agent. The \textbf{Agent} network computes the standard expected returns, whereas the \textbf{Opt} network estimates the maximum achievable returns using its monotonically non-decreasing update mechanism. While the \textbf{Agent} network significantly undervalues the ``move'' action due to the aforementioned reward limitations, the \textbf{Opt} network accurately identifies its potential as a critical component of a optimal joint action. By leveraging these optimistic estimates during the $\epsilon$-exploration phase, our strategy significantly increases the action selection probability of the ``move'' action compared to conventional uniform random exploration. Consequently, the Colossus agent is encouraged to attempt the ``move'' action more frequently, ultimately discovering the higher true returns associated with successful hit-and-run coordination. This optimistic sampling loop enables the main Agent network to rapidly overcome its initial underestimation and correctly adjust its valuation of the ``move'' action. Through this mechanism, our approach effectively mitigates the value underestimation problem, facilitating highly directed exploration and successfully guiding the joint policy toward optimal joint actions.

\section{Conclusion}\label{sec:6}

We proposed the Optimistic $\epsilon$-Greedy Exploration strategy for cooperative MARL, which aims to explore optimal joint actions by introducing optimistic action-value networks updated in a monotonically non-decreasing manner. We proved theoretically that these networks converge in probability to the maximum achievable returns. By leveraging these optimistic estimates, our exploration strategy encourages agents to choose optimal joint actions more frequently during the exploration phase. We evaluated the proposed strategy with value decomposition methods across several cooperative multi-agent environments of varying complexity, demonstrating that it effectively guides the joint policy toward optimal joint actions. Furthermore, it achieved higher final returns, significantly superior win rates, and faster convergence compared to other enhanced value decomposition algorithms such as WQMIX, and QTRAN. Crucially, by independently identifying the optimal action components for each individual agent, our exploration strategy maintains a linear computational complexity of $\mathcal{O}(|A|)$ with respect to the action space of each agent, allowing efficient scaling in the number of actions.

While our proposed method demonstrates substantial improvements, it is important to acknowledge its limitations to guide future research. First, our current theoretical analysis and practical implementation are strictly framed within conventional monotonic value decomposition paradigms. Investigating how optimism-guided exploration mechanisms can be adapted for and integrated into broader MARL paradigms, such as multi-agent policy gradient methods, remains an important open question. Second, our approach fundamentally assumes a fully cooperative Dec-POMDP setting. Relaxing this assumption to address wider classes of general-sum games, similar to the extensions explored in ParetoAC, would significantly broaden its applicability. Furthermore, from a training dynamics perspective, optimistic methods show great promise in long-term policy guidance but can occasionally face early-stage instability due to neural network approximation errors. Incorporating prior knowledge, such as task representations, could mitigate this by stabilizing the training process and accelerating early-stage learning.

\begin{appendices}

\section{Proof of Lemmas and Theorems}\label{secA}

\setcounter{theorem}{1}
\setcounter{lemma}{0}

\begin{lemma}
For the estimation function sequence $\{f_t(x)\}$ defined in Eq. \eqref{eq:10}, if initialized with $f_0(x)\leq r_{max}$, where $r_{max}=\max_t r_t$, then $\forall t\geq0$, it holds that $f_t(x)\leq r_{max}$.
\end{lemma}

\begin{proof}

    We prove the lemma by employing the method of mathematical induction.

    When $t=0$, since the sequence is initialized with $f_0(x)\leq r_{max}$, the base case holds.

    Assume that the lemma holds true for some arbitrary positive integer $t$, that is $f_t(x)\leq r_{max}$. We will now show that the lemma holds for $t+1$. $f_{t+1}(x)$ satisfies:

    \begin{equation*}
        \begin{aligned}
            f_{t+1}(x) &= f_t(x) + \alpha( r_t-f_t( x ))\\
            &\leq f_t(x) + r_t-f_t( x )     & (\alpha \in [0,1])\\
            &\leq r_{t}\\
            &\leq r_{max}    & (r_t(a) \leq r_t(a^*))
        \end{aligned}
    \end{equation*}

    Thus, the lemma holds for $t+1$. By the principle of mathematical induction, the lemma holds for all $t \geq 0$.
    
\end{proof}

\begin{lemma}
    Given an auxiliary function sequence updating in the following form:
    \begin{equation*}
	\begin{aligned}
		f_{t+1}^{'}\left( x \right)= \begin{cases}
		      f_t^{'}\left( x \right) +\alpha \left( r_t-f_t^{'}\left( x \right) \right)&		if\,\,r_t=r_{max}\\
				f_t^{'}\left( x \right)&		else\\
		\end{cases}
	\end{aligned}
    \end{equation*}
    For the estimation function sequence $\{f_t(x)\}$ defined in Eq. \eqref{eq:10}, if initialized with $f_0(x) = f_0^{'}(x)$, then $\forall t\geq 0$, it holds that $f_t(x)\geq f_t^{'}(x)$.
\end{lemma}

\begin{proof}
    We prove the lemma by employing the method of mathematical induction.

    When $t=0$, since the sequence is initialized with $f_0(x) = f_0^{'}(x)$, the base case holds.

    Assume that the lemma holds true for some arbitrary positive integer $t$, that is $f_t(x)\geq f_t^{'}(x)$. We will now show that the lemma holds for $t+1$.

    If $r_t\leq f_t^{'}(x)$, $f_{t+1}(x)$ and $f_{t+1}^{'}(x)$ satisfy:

    \begin{equation*}
	\begin{aligned}
		f_{t+1}(x) = f_t(x)\geq f_t^{'}(x) = f_{t+1}^{'}(x)
	\end{aligned}
    \end{equation*}

    If $f_t^{'}(x) < r_t \leq f_t(x)$, $f_{t+1}(x)$ and $f_{t+1}^{'}(x)$ satisfy:

    \begin{equation*}
	\begin{aligned}
		f_{t+1}(x) &= f_t(x)\\
		&\geq r_t\\
		&= r_t + f_t^{'}(x) -f_t^{'}(x)\\
		&\geq f_t^{'}(x) + \alpha(r_t - f_t^{'}(x))    & (\alpha \in [0,1])\\
		&=f_{t+1}^{'}(x)
	\end{aligned}
    \end{equation*}

    If $r_t > f_t(x)$, $f_{t+1}(x)$ and $f_{t+1}^{'}(x)$ satisfy:

    \begin{equation*}
	\begin{aligned}
		f_{t+1}(x) &= f_t(x)+\alpha(r_t - f_t(x))\\
		&=\alpha r_t + (1-\alpha)f_t(x)\\
		&\geq \alpha r_t + (1-\alpha)f_t^{'}(x)\\
		&=f_{t+1}^{'}(x)
	\end{aligned}
    \end{equation*}

    Therefore, regardless of the value of $r_t$, it always holds that $f_{t+1}(x)\geq f_{t+1}^{'}(x)$. By the principle of mathematical induction, the lemma holds for all $t \geq 0$.
    
\end{proof}
\begin{theorem}
The Optimistic $\epsilon$-Greedy Exploration strategy defined in Eq. \eqref{eq:12} samples the optimal action of each agent with a higher probability than the traditional $\epsilon$-greedy exploration strategy defined in Eq. \eqref{eq:3}.
\end{theorem}

\begin{proof}
    Considering that temperature normalization does not affect the relative magnitude of values, we analyze the influence of $f_i(x)$ in this proof. We first prove $Softmax(f(a^*))\geq\frac{1}{|a|}$ by contradiction. Assume, for the sake of contradiction, that $Softmax(f(a^*))<\frac{1}{|a|}$. For all $a\neq a^*$, it holds that $f(a)\leq f(a^*)$; therefore, we have:

    \begin{equation*}
	\begin{aligned}
		\sum_{a\in A}{Softmax(f(a))} &\leq \sum_{a\in A}{Softmax(f(a^*))}\\
        &= |a|Softmax(f(a^*))\\
        &< |a|\times \frac{1}{|a|}\\
        &= 1
	\end{aligned}
    \end{equation*}

    Now, we have shown that $\sum_{a\in A}{Softmax(f(a))}<1$. However, this contradicts the definition of the Softmax function, which satisfies $\sum{Softmax(\cdot)}=1$. Since assuming that $Softmax(f(a^*))<\frac{1}{|a|}$ leads to a contradiction, we conclude that $Softmax(f(a^*))\geq\frac{1}{|a|}$.

    Building upon this, we prove this theorem by considering all possible cases for the optimal action component $a^*$.

    \noindent \textbf{Case 1:} $a^* = arg\max Q(\tau,a)$

    The \textbf{Optimistic $\epsilon$-Greedy Exploration} strategy selects the optimal action with a probability of $1-\epsilon+\epsilon Softmax(a^*)$, whereas the conventional $\epsilon$-greedy exploration strategy selects it with a probability of $1-\epsilon+\frac{\epsilon}{|a|}$. Since it has already been proven that $Softmax(f(a^*))\geq\frac{1}{|a|}$, this demonstrates that the \textbf{Optimistic $\epsilon$-Greedy Exploration} strategy provides a higher probability of selecting the optimal action.

    \noindent \textbf{Case 2:} $a^* \neq arg\max Q(\tau,a)$

    The Optimistic $\epsilon$-Greedy Exploration strategy selects the optimal action with a probability of $\epsilon Softmax(a^*)$, whereas the conventional $\epsilon$-greedy exploration strategy selects it with a probability of $\frac{\epsilon}{|a|}$. Since it has already been proven that $Softmax(f(a^*))\geq\frac{1}{|a|}$, this demonstrates that the Optimistic $\epsilon$-Greedy Exploration strategy provides a higher probability of selecting the optimal action.

    In both cases, we have shown that the Optimistic $\epsilon$-Greedy Exploration strategy provides a higher probability of selecting the optimal action. Therefore, the theorem holds true for all possible situations regarding the optimal action $a^*$.
\end{proof}

\section{Implementation of Different Exploration Strategies}\label{secB}

During our ablation experiments, we incorporated several stochastic exploration strategies into the QMIX framework, including the conventional $\epsilon$-greedy exploration, noise-based exploration, and intrinsic reward-based exploration. In this subsection, we detail the specific implementations of each strategy within QMIX to ensure experimental fairness and reproducibility.

\textbf{Conventional $\epsilon$-Greedy Exploration:} We implement the conventional $\epsilon$-greedy exploration strategy based on the reinforcement learning textbook \cite{sutton1998reinforcement}. Specifically, at each environment step, the agent generates a random variable. With probability $\epsilon$, the agent explores by uniformly sampling an action from the set of currently available actions. Conversely, with probability $1-\epsilon$, the agent exploits by selecting the action that yields the maximum estimated action-value (Q-value). To handle dynamic action spaces within the environment, an action masking mechanism is employed, which assigns a value of negative infinity ($-\infty$) to unavailable actions to ensure they are strictly excluded from selection. Furthermore, the exploration rate $\epsilon$ follows an annealing schedule: it is initialized to a higher value to encourage broad exploration at the beginning of training, and decays to a predefined lower bound. It is worth noting that during the evaluation phase, $\epsilon$ is strictly set to $0.0$ to ensure purely greedy action selection based on the learned policy. During our experiments, the annealing time and annealing function are set according to the configurations detailed in Section \ref{subsection5.1}.

\textbf{Noise-Based Exploration:} We implement the noise-based exploration following the Noise-Regularized Advantage Value method \cite{wang2022noise}. In contrast to the conventional $\epsilon$-greedy method, this strategy introduces stochasticity by directly injecting scaled uniform noise into the estimated Q-values prior to action selection. Specifically, at each time step, the algorithm first calculates the numerical range of the current Q-values (i.e., the difference between the maximum and minimum Q-values across all actions). It then generates uniformly distributed random noise within the interval $[-1, 1]$. This noise is proportionally scaled by both the calculated Q-value range and the current exploration rate $\epsilon$, and is subsequently added to the original Q-values to produce perturbed value estimates. To handle dynamic action spaces, the action masking mechanism is applied, overriding the perturbed Q-values of unavailable actions to negative infinity ($-\infty$). Ultimately, the agent selects the action that maximizes this noisy Q-value. Similar to the conventional approach, $\epsilon$ acts as a noise scaling factor and follows an annealing schedule, which gradually reduces the magnitude of the injected noise over time. During the evaluation phase, $\epsilon$ is strictly set to $0.0$, eliminating the noise entirely for purely greedy execution. To ensure experimental fairness through hyperparameter consistency, the annealing duration and the annealing function of the noise scaling factor $\epsilon$ are kept identical to those in the conventional $\epsilon$-greedy exploration strategy.

\textbf{Intrinsic Reward-Based Exploration:} We implement the intrinsic reward-based exploration following DDN \cite{zhou2025double}. Built upon the conventional $\epsilon$-greedy method, this strategy introduces an additional reward signal to encourage the agent to explore states where the reward estimates are not yet accurate. Following the implementation of DDN, we approximate the true environmental expected return using the output of the unconstrained joint action-value network $Q_{jt}$, and use the discrepancy between this target and the global action-value estimation $Q_{tot}$ as the basis for quantifying the agent's intrinsic reward. We use the product of the absolute value of this discrepancy and an intrinsic reward scaling factor as the supplementary reward signal to encourage the agent to explore states with inaccurate value estimations during the exploration process. In this experiment, the scaling factor is set to a constant of 0.1. Aside from introducing the additional reward, the action selection mechanism of this method is identical to the conventional $\epsilon$-greedy approach. In our experiments, we adopt the same parameter settings as the conventional $\epsilon$-greedy method described previously.

\end{appendices}

\bibliography{sn-bibliography}%

\end{document}